\newcommand{\GBL}{{\bf GBL}}
\newcommand{\GBLe}{{\bf GBL}_{ewf}}
\newcommand{\BL}{\bf BL}
\newcommand{\GD}{\bf GD}
\newcommand{\MV}{\bf MV}
\newcommand{\cinf}{\lfloor \mathbf{inf} \rfloor}
\newcommand{\Kmodels}{\Vdash^{\rm K}}
\newcommand{\GBMmodels}{\Vdash^{\rm GBM}}
\newcommand{\LBMmodels}{\Vdash^{\rm LBM}}
\newcommand{\nLBMmodels}{\nVdash^{\rm LBM}}
\newcommand{\LKmodels}{\Vdash^{\rm LK}}
\newtheorem{thm}{Theorem}
\newtheorem{lemma}[thm]{Lemma}
\newtheorem{proposition}[thm]{Proposition}%
\newtheorem{corollary}[thm]{Corollary}%
\newtheorem{note}[thm]{Note}
\theoremstyle{definition}
\newtheorem{definition}[thm]{Definition}%
\title{A Kripke Semantics for Hajek's BL}
\author{Andrew Lewis-Smith
\institute{Department of Computer Science\\
University of Sheffield,\\
Sheffield, United Kingdom}
\email{andrew.lewis-smith@sheffield.ac.uk}}
\begin{document}
\maketitle

\begin{abstract}
We provide a generalisation of Kripke semantics for Petr Hajek's Basic Logic and prove soundness and completeness of the same with respect to our semantics. We find this semantics easily specialises to the linearly ordered Kripke frames for Godel-Dummett logic which $\BL$ properly contains. Our soundness, deduction theorem and completeness arguments further strengthen this analogy. This paper extends the insights of \cite{Robinson2021-LEWKSF} from $\GBLe$ to the case of $\BL$.
\end{abstract}

\section{Introduction}
\par Hajek's Basic logic ($\BL$) occupies a central place in contemporary research on fuzzy and substructural logic. $\BL$ is  primarily studied algebraically. This is only natural: the logic is strongly algebraizable \cite{Hajek1998-HAJMOF} and is the logic of t-norms \cite{Hajek1998}. But the papers of Jipsen and Montagna \cite{JIPSEN20101559} and Bova and Montagna \cite{BOVA20091143} suggest an alternative view of the situation. We can employ algebraic embedding results via poset products to construct generalisations of Kripke semantics appropriate to extensions of $\GBL$ \cite{Robinson2021-LEWKSF} (and later \cite{Fussner2022}). This situates systems such as $\GBL$ and $\BL$ among constructive and intermediate logics, whose relational semantics are well-understood.\\
\par The semantics we devise for $\BL$ restricts that of \cite{Robinson2021-LEWKSF}. The present structures are defined over linear frames, hence our designation `Linear Bova-Montagna structure' or LBM structure.\footnote{So-called on account of our study of Bova and Montagna's PSPACE-completeness results for the consequence relation of $\GBLe$-algebras in \cite{BOVA20091143}, which inspired our work in \cite{Robinson2021-LEWKSF}. We note in passing that the poset product construction itself appears to originate with Peter Jipsen (and is first recorded in Jipsen and Montagna \cite{JIPSEN20101559}), but as our previous publication refers to BM structures, and as LBMJ isn't as euphonious, we have determined to maintain precedent and refer to these as LBM structures.} We give a natural deduction system corresponding to the Hilbert system which we suspect in time can be adapted into a labelled calculus by importing insights from the semantics given here. For the present paper, these considerations yield a proof of soundness and completeness that attempts to de-emphasise reliance on algebra and resembles classic proofs of adequacy for systems like G{\"o}del-Dummett logic under Kripke semantics.\\
\par The structure of the paper is as with \cite{Robinson2021-LEWKSF}. Section 2 gives $\BL$'s natural deduction system, followed by suitable definitions of algebras, validity, and our relational semantics. We show how our semantics can specialise to the classic system of G{\"o}del-Dummett logic under linearly-ordered Kripke structures, and then prove $\BL$ sound and complete for our semantics.\\  

\section{Proof theory for Basic Logic}
\par We consider briefly the proof theory of $\BL$. We present the Hilbert-style and natural deduction renderings for the sake of clarity, but also to serve our later exposition and results (in particular our completeness proof).
\par The formulas of $\BL$ are inductively defined from atomic formulas, including $\bot$, and the binary connectives $\psi \land \chi$, $\psi \lor \chi$, $\psi \otimes \chi$ and $\psi \to \chi$. We will refer to this language as $\mathcal{L}_{\otimes}$, since it extends the language $\mathcal{L}$ of Godel-Dummett logic (see Note \ref{note of GD}) with a second form of conjunction, $\psi \otimes \chi$.
\par Figure \ref{table-bl} gives a natural deduction system for Hajek's $\BL$. When we write a sequent $\Gamma \vdash \phi$ we are always assuming $\Gamma$ to be a finite sequence of formulas. Note that we have the structural rules of weakening and exchange, but not contraction. Hence, the number of occurrences of a formula in $\Gamma$ matters, and one could think of the contexts $\Gamma$ as multisets. In particular, the rule $\to$ I removes one occurrence of $\phi$ from the context $\Gamma, \phi$, concluding $\phi \to \psi$ from the smaller context $\Gamma$. This makes $\BL$ a form of Affine logic. 
\begin{figure}
\begin{center}
\begin{tabular}{cc} 
\multicolumn{2}{c}{\AxiomC{}
    \RightLabel{\scriptsize{Ax}}
    \UnaryInfC{$\phi \vdash \phi$}
    \DisplayProof
    \qquad    
    \AxiomC{$\Gamma \vdash \psi$}
    \RightLabel{\scriptsize{W}}
    \UnaryInfC{$\Gamma, \phi \vdash \psi$}
    \DisplayProof
    \qquad
    \AxiomC{$\Gamma, \phi,\psi, \Delta \vdash \chi$}
    \RightLabel{\scriptsize{Ex}}
    \UnaryInfC{$\Gamma, \psi,\phi, \Delta \vdash \chi$}
    \DisplayProof
    }
 \\[4ex]
    \AxiomC{$\Gamma, \phi \vdash \psi$}
    \RightLabel{\scriptsize{$\to$ I}}
    \UnaryInfC{$\Gamma \vdash \phi \to \psi$}
    \DisplayProof
&   
    \AxiomC{$\Gamma \vdash \phi\to\psi$}
    \AxiomC{$\Delta \vdash \phi$}
    \RightLabel{\scriptsize{$\to$ E}}
    \BinaryInfC{$\Gamma, \Delta \vdash \psi$}
    \DisplayProof 
\\[4ex]
    \AxiomC{$\Gamma \vdash \phi$}
    \AxiomC{$\Delta \vdash \psi$}
    \RightLabel{\scriptsize{$\otimes$ I}}
    \BinaryInfC{$\Gamma, \Delta \vdash \phi \otimes \psi$}
    \DisplayProof
 &
    \AxiomC{$\Gamma, \phi, \psi \vdash \chi$}
    \AxiomC{$\Delta \vdash \phi \otimes \psi$}
    \RightLabel{\scriptsize{$\otimes$ E}}
    \BinaryInfC{$\Gamma, \Delta \vdash \chi$}
    \DisplayProof
 \\[4ex]
    \AxiomC{$\Gamma \vdash \phi$}
    \AxiomC{$\Gamma \vdash \psi$}
    \RightLabel{\scriptsize{$\land$ I}}
    \BinaryInfC{$\Gamma \vdash \phi \land \psi$}
    \DisplayProof
&
    \AxiomC{$\Gamma \vdash \phi_1 \land \phi_2$}
    \RightLabel{\scriptsize{$\land$ E}}
    \UnaryInfC{$\Gamma \vdash \phi_i$}
    \DisplayProof
 \\[4ex]
    \AxiomC{$\Gamma \vdash \phi_i$}
    \RightLabel{\scriptsize{$\lor$ I)}}
    \UnaryInfC{$\Gamma \vdash \phi_1 \lor \phi_2$}
    \DisplayProof
 &
    \AxiomC{$\Gamma \vdash \phi \lor \psi$}
    \AxiomC{$\Delta, \phi \vdash \chi$}
    \AxiomC{$\Delta, \psi \vdash \chi$}
    \RightLabel{\scriptsize{$\lor$ E}}
    \TrinaryInfC{$\Gamma, \Delta \vdash \chi$}
    \DisplayProof
\\[4ex]
    \AxiomC{$\Gamma, \phi, \phi \to \psi \vdash \chi$}
    \RightLabel{\scriptsize{DIV}}
    \UnaryInfC{$\Gamma, \psi, \psi \to \phi \vdash \chi$}
    \DisplayProof
&
    \AxiomC{$\Gamma \vdash \bot$}
    \RightLabel{\scriptsize{$\bot$ E}}
    \UnaryInfC{$\Gamma \vdash \phi$}
    \DisplayProof
\\[4ex]
    \AxiomC{}
    \RightLabel{\scriptsize{Prelin}}
    \UnaryInfC{$\Gamma \vdash \phi \to \psi \lor \psi \to \phi$}
    \DisplayProof
\end{tabular}
\caption{Basic logic  $\BL$}
\label{table-bl}
\end{center}
\end{figure}
{$\BL$} indeed has a \emph{resource sensitive} deduction theorem. The connective $\to$ internalises the consequence relation $\vdash$, and $\otimes$ internalises the comma in the sequent: 

\begin{proposition}\label{othersimple} 
The following hold in any calculus with rules Ax,$\to$~I, $\to$~E, $\otimes$~I, $\otimes$~E (and so for $\BL$:
\begin{enumerate}
\item $\Gamma, \psi \vdash \chi$ iff $\Gamma \vdash \psi \to \chi$. 
\item $\Gamma, \phi, \psi \vdash \chi$ iff $\Gamma, \phi\otimes\psi \vdash \chi$.
\end{enumerate}
\end{proposition}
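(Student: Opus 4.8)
The plan is to verify each biconditional by exhibiting the short derivations that witness both directions; no induction on derivations is needed, only the five named rules together with the axioms. I treat the two clauses separately, and in each case the forward direction is a single rule application while the converse is slightly more involved.

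For clause (1) the forward direction is immediate: given $\Gamma, \psi \vdash \chi$, one application of $\to$~I discharges the rightmost hypothesis $\psi$ and yields $\Gamma \vdash \psi \to \chi$. For the converse I begin from the instance $\psi \vdash \psi$ of Ax and apply $\to$~E to the pair $\Gamma \vdash \psi \to \chi$ and $\psi \vdash \psi$, producing $\Gamma, \psi \vdash \chi$ exactly. Thus clause (1) is just the statement that $\to$ is the residual (right adjoint) of the comma. For the forward direction of clause (2) I again use a single elimination: from $\Gamma, \phi, \psi \vdash \chi$ and the axiom $\phi \otimes \psi \vdash \phi \otimes \psi$, the rule $\otimes$~E with $\Delta := \phi \otimes \psi$ delivers $\Gamma, \phi \otimes \psi \vdash \chi$.

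The converse of clause (2) is the one step that is not a direct rule instance, and I expect it to be the main obstacle, since no primitive cut rule is available. The strategy is to simulate cut by currying through $\to$. First I build $\phi, \psi \vdash \phi \otimes \psi$ from two instances of Ax and one application of $\otimes$~I. Next, from the hypothesis $\Gamma, \phi \otimes \psi \vdash \chi$ I apply $\to$~I to obtain $\Gamma \vdash (\phi \otimes \psi) \to \chi$. Finally $\to$~E applied to $\Gamma \vdash (\phi \otimes \psi) \to \chi$ and the derived $\phi, \psi \vdash \phi \otimes \psi$ gives precisely $\Gamma, \phi, \psi \vdash \chi$, because $\to$~E concatenates the contexts of its two premises.

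The only remaining points require mere bookkeeping: $\to$~I acts on the rightmost formula of the antecedent, so one must check that it is $\psi$ (resp.\ $\phi \otimes \psi$) that is discharged, and that the context concatenations produced by $\to$~E and $\otimes$~E match $\Gamma, \phi, \psi$ and $\Gamma, \phi \otimes \psi$. In the derivations above these line up on the nose, with Ex available should any repositioning be needed; crucially neither weakening nor contraction is invoked. This confirms the slogan preceding the statement, that $\to$ internalises $\vdash$ (clause 1) and $\otimes$ internalises the comma of the sequent (clause 2).
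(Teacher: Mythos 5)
Your proof is correct: all four directions are witnessed by exactly the derivations one expects, using only Ax, $\to$~I, $\to$~E, $\otimes$~I and $\otimes$~E, with the cut-like step in the converse of clause (2) correctly simulated by currying through $\to$ (deriving $\phi,\psi \vdash \phi \otimes \psi$ by $\otimes$~I and then applying $\to$~E to $\Gamma \vdash (\phi \otimes \psi) \to \chi$). The paper itself gives no details, deferring to \cite{Robinson2021-LEWKSF}, and your explicit derivations are precisely the standard argument that reference supplies, so you are taking essentially the same approach.
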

\begin{proof}
As in \cite{Robinson2021-LEWKSF}.

%
    %
       
       
    %
%
%
\end{proof}

Below, we present the Hilbert system $\BL_{H}$ of (\cite{Hajek1998-HAJMOF}). 
\begin{itemize}
	\item[(A1)] $\phi \to \phi$
	\item[(A2)] $(\phi \to \psi) \to ((\psi \to \chi) \to (\phi \to \chi))$
	\item[(A3)] $(\phi \otimes \psi) \to (\psi \otimes \phi)$
	\item[(A4)] $(\phi \otimes \psi) \to \psi$
	\item[(A5)] $(\phi \to (\psi \to \chi)) \to ((\phi \otimes\psi) \to \chi))$
	\item[(A6)] $((\phi \otimes\psi) \to \chi)) \to (\phi \to (\psi \to \chi))$
	\item[(A7)] $(\phi \otimes (\phi \to \psi)) \to (\phi \wedge \psi)$
	\item[(A8)] $(\phi \wedge \psi) \to (\phi \otimes (\phi \to \psi))$
	\item[(A9)] $(\phi \wedge \psi) \to (\psi \wedge \phi)$
	\item[(A10)] $\phi \to (\phi \vee \psi)$
	\item[(A11)] $\psi \to (\phi \vee \psi)$
	\item[(A12)] $((\phi \to \psi) \wedge (\chi \to \psi)) \to ((\phi \vee \chi) \to \psi)$
	\item[(A13)] $\bot \to \phi$
	\item[(A14)] $(\phi \to \psi) \lor (\psi \to \phi)$
	\item[(R1)] $\phi, \phi \to \psi \vdash_{\scriptsize{\BL_{H}}} \psi$
\end{itemize}
thus obtaining the system we refer to as $\BL_{H}$. When we wish to stress the precise system in which a sequent $\Gamma \vdash \phi$ is derivable we use the system as a subscript of the provability sign, e.g. $\Gamma \vdash_{\mbox{\scriptsize{$\BL$}}} \phi$. 

\begin{proposition} \label{BL-equiv} The natural deduction system $\BL$ (Figure \ref{table-bl}) 
has the same derivable formulas as the Hilbert-style system $\BL_{H}$ of \cite{Hajek1998-HAJMOF}:
\[
\psi_1, \ldots, \psi_n \vdash_{\mbox{\scriptsize{$\BL$}}} \phi \quad \mbox{iff} \quad \vdash_{\scriptsize{\BL_{H}}} \psi_1 \to \ldots \to \psi_n \to \phi
\]
\end{proposition}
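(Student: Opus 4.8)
The plan is to prove the biconditional by reducing it to an equivalence of theoremhood between the two calculi, and then to establish that equivalence by two mutual simulations. First I would invoke the resource-sensitive deduction theorem (Proposition~\ref{othersimple}(1)), applied $n$ times, to internalise the context on the natural deduction side: $\psi_1, \ldots, \psi_n \vdash_{\BL} \phi$ holds iff $\vdash_{\BL} \psi_1 \to \cdots \to \psi_n \to \phi$. After this reduction the claim becomes the purely theorem-level statement $\vdash_{\BL}\theta \iff \vdash_{\BL_H}\theta$ for every formula $\theta$, which is what I would then prove directly.

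For the inclusion $\vdash_{\BL_H}\theta \Rightarrow \vdash_{\BL}\theta$, I would verify that each Hilbert axiom A1--A14 is derivable from the empty context in the natural deduction calculus of Figure~\ref{table-bl}, and observe that the rule R1 is just $\to$E. Most of these derivations are short: A14 is Prelin, A13 combines $\bot$E with $\to$I, A3/A4 come from $\otimes$I/$\otimes$E together with $\to$I, A9--A12 from the $\land$/$\lor$ rules, and A5/A6 are the formula-level counterpart of the currying equivalence in Proposition~\ref{othersimple}(2). An arbitrary $\BL_H$-proof then transfers line by line into $\BL$ by induction on its length.

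For the converse $\vdash_{\BL}\theta \Rightarrow \vdash_{\BL_H}\theta$, a bare induction on theorems does not close, because the subderivations of a natural deduction proof carry nonempty contexts. I would therefore strengthen the statement to the generalised form $\Gamma \vdash_{\BL}\phi \Rightarrow \vdash_{\BL_H} \Gamma^{\otimes} \to \phi$, where $\Gamma^{\otimes}$ denotes the $\otimes$-conjunction of the multiset $\Gamma$ (read as $\phi$ itself when $\Gamma$ is empty), and induct on the shape of the derivation. The structural rules are then handled by the matching $\otimes$ laws --- Ex by commutativity A3, W by A4 --- while A5/A6 let me curry and uncurry freely, so that $\Gamma^{\otimes}\to\phi$ and the iterated implication $\psi_1 \to \cdots \to \psi_n \to \phi$ are interchangeable and the reduction of the first paragraph matches up. To run the induction I would first prove in $\BL_H$ the standard battery of derived principles: transitivity of $\to$ (immediate from A2), monotonicity/congruence of $\otimes$, $\land$, and $\to$ in each argument, and a derived disjunction-elimination principle from A10--A12. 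Each remaining introduction/elimination rule is then matched by prefixing, composing, or rearranging these Hilbert derivations.

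The main obstacle will lie in this converse direction, and specifically in the rules that reshape the context nontrivially: $\to$I, $\otimes$E, $\lor$E, and above all DIV. Simulating DIV in $\BL_H$ requires combining the divisibility axioms A7/A8 with prelinearity A14 to pass between the context patterns $\psi, \psi\to\phi$ and $\phi,\phi\to\psi$, and checking that this passage remains provable once everything is recast through the $\otimes$-encoded context. Assembling the supporting transitivity and monotonicity lemmas --- in effect a small equational theory of $\otimes$ and $\to$ inside the Hilbert calculus --- is where the genuine effort goes; once those lemmas are available, the rule-by-rule matching is routine.
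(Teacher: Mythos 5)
Your proposal is correct and takes essentially the same route as the paper's proof: both directions proceed by mutual simulation, inducting on the natural deduction derivation to translate each rule into a $\BL_{H}$ theorem (the paper organises this via the provability ordering $\phi \leq \psi$ iff $\vdash_{\BL_{H}} \phi \to \psi$ and cites \cite{Robinson2021-LEWKSF} for all cases except the new axiom (A14)), and inducting on the $\BL_{H}$ derivation after checking that each Hilbert axiom is a natural deduction theorem (where only (A14), i.e.\ Prelin, is new). One minor correction to your roadmap: simulating DIV does not require prelinearity (A14) at all --- chaining (A7), (A9) and (A8) via (A2) yields $\vdash_{\BL_{H}} (\psi \otimes (\psi \to \phi)) \to (\phi \otimes (\phi \to \psi))$, which is exactly what the rule needs.
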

\begin{proof}
As in the proof of Proposition 1.5 in \cite{Robinson2021-LEWKSF}. Left-to-right follows by induction on the structure of the natural deduction proof once one shows each instance of a natural deduction rule translates to a theorem of $\BL_{H}$. We amend the provability ordering from \cite{Robinson2021-LEWKSF} for the present system: \[\phi \leq \psi\mbox{ iff }\vdash_{\scriptsize{\BL_{H}}} \phi\to\psi\]
Since $\BL_{H}$ results from $\GBLe$ by adding $(A14)$, all other cases are as in \cite{Robinson2021-LEWKSF}, except $(A14)$ which simply says that the provability relation is linearly ordered. For the right to left direction of the `iff', this follows by induction on the $\BL_{H}$ derivation of 
$\psi_1 \to \ldots \to \psi_n \to \phi$ once we show that each of the axioms of $\BL_{H}$ are theorems of {$\BL$}. The only case left to verify then is $(A14)$, and this is an axiom in the natural deduction calculus $\BL$ hence always provable in that calculus.
\end{proof}

\begin{note}\label{note of GD}
G{\"o}del-Dummett logic $\GD$ results from $\BL$ by adjoining the structural rule of contraction.
\end{note}

\subsection{BL and MV-algebras}
\label{Algebraic semantics for BL}
We situate the algebraic semantics characterising $\BL$ in terms of the somewhat larger theory of residuated lattices.


\begin{definition} $\mathcal{A} = \langle A, \land, \lor, \otimes, 1, \to \rangle$ is called a $\emph{commutative residuated lattice}$ if
\begin{itemize}
\item  $\langle A,\wedge,\vee,\otimes, 1 \rangle$ is a commutative lattice-ordered monoid
\item $x \otimes y \leq z$ if and only if $x \leq y \rightarrow z$
\end{itemize}
\end{definition}

\begin{definition}[$\BL$-algebras, $\BL$-chains]
A {\em $\BL$-algebra} is a bounded, commutative residuated lattice which satisfies the {\em divisibility property}: if $x\leq y$ then $y \otimes (y \rightarrow x) = x$; pre-linear: $(x \to y) \lor (y \to x)$; bounded from below by $\bot$, i.e. $\bot \leq x$ for all $x \in A$, and integral in that $1$ is the top element of the lattice, i.e. $x \leq 1$ for all $x \in A$. In this case we also denote $1$ by $\top$. Finally, we note the condition of the Divisibility property is equivalent to requiring that the residuated lattice satisfy the equation $x \otimes (x \rightarrow y) = y \otimes (y \rightarrow x)$. A $\BL$-chain is a totally-ordered $\BL$-algebra.
%
%
%
%
\end{definition}





\begin{definition}[$\MV$-algebra] A $\BL$-algebra is called an $\MV$-algebra if the negation map ($\neg x = x \rightarrow \bot$) is an involution, i.e. $(x\rightarrow \bot) \rightarrow \bot = x$, for all $x$.
\end{definition}

$\MV$-algebras provide an algebraic semantics for classical {\L}ukasiewicz logic. Here we are interested in a particular MV algebra which we will use in our Kripke semantics for $\BL$:

\begin{definition}[Standard $\MV$-chain] \label{std-mv-chain} For $x \in [0,1]$, let $\overline{x} := 1 - x$. The \emph{standard $\MV$-chain}, denoted $[0,1]_{{\rm MV}}$, is the $\MV$-algebra defined as follows: The domain of $[0,1]_{{\rm MV}}$ is the unit interval $[0,1]$, with the constants and binary operations defined as
\[
\begin{array}{lcl}
	\top & := & 1 \\[2mm]
	\bot & := & 0 \\[2mm]
	x \wedge y & := & \min \{ x, y \} \\[2mm]
	x \vee y & := & \max \{ x, y \} \\[2mm]
	x \otimes y & := & \max \{ 0, \overline{\overline{x} + \overline{y}} \} \\[2mm]
	x \to y & := & \min \{ 1, \overline{\overline{y} - \overline{x}} \}
\end{array}
\]
%
\end{definition}

\begin{note} $x \otimes y$ is equivalent to $\max\{ 0, x + y - 1\}$, and
$x \to y$ is equivalent to $\min \{ 1, y - x + 1 \}$.
%
\end{note}
\section{Valid Sequents in \BL}
\begin{definition}[Denotation functions]
\label{sec:denotation-functions} Given a $\MV$-chain $[0,1]_{\mathcal{\MV}}$, and a mapping from propositional variables to elements of $[0,1]_{\mathcal{\MV}}$:  $$p \mapsto  \llbracket p \rrbracket \in {[0,1]_{\mathcal{\MV}}}$$ We thus refer to the denotation of a variable $p$ as $\llbracket p \rrbracket_{\mathcal{\MV}}$. We can extend that mapping to all formulas in the language of L in a straightforward way:
\[
\begin{array}{lcl}
	\llbracket \phi \otimes \psi \rrbracket_{\mathcal{\MV}} & := &  \llbracket \phi \rrbracket_{\mathcal{\MV}} \otimes \llbracket \psi \rrbracket_{\mathcal{\MV}} \\[2mm]
	\llbracket \phi \land \psi \rrbracket_{\mathcal{\MV}} & := &  \llbracket \phi \rrbracket_{\mathcal{\MV}} \land \llbracket \psi \rrbracket_{\mathcal{\MV}} \\[2mm]
	\llbracket \phi \lor \psi \rrbracket_{\mathcal{\MV}} & := &  \llbracket \phi \rrbracket_{\mathcal{\MV}} \lor \llbracket \psi \rrbracket_{\mathcal{\MV}} \\[2mm]
	\llbracket \phi \to \psi \rrbracket_{\mathcal{\MV}} & := &  \llbracket \phi \rrbracket_{\mathcal{\MV}} \to \llbracket \psi \rrbracket_{\mathcal{\MV}}
\end{array}
\]

\end{definition}

\begin{definition}[Validity]
A sequent $\phi_1, . . . , \phi_n \vdash_{\BL} \psi$ is then said to 
be valid in $\BL$-algebras, if $\llbracket \phi_1 \rrbracket \otimes . . . \otimes \llbracket \phi_n \rrbracket \le \llbracket \psi \rrbracket$
holds in $\BL$-algebras. A sequent is said to be \emph{valid} if it is valid in \emph{all} $\BL$-algebras. We can write this: $\Gamma \models_{\BL} \phi$ In the case where $\phi$ is valid in all $\BL$-algebras, we write $\models_{\BL}\phi$. The valid sequents, in the sense above, are precisely the ones provable in Basic Logic \cite{Hajek1998-HAJMOF}:
\end{definition}

\begin{proposition}\label{ALGcompletenessBL} A sequent $\Gamma \vdash \psi$ is $\BL$-valid iff it is provable in $\BL$.
\end{proposition}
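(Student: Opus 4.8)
The plan is to establish the two directions separately, as is standard for an algebraic adequacy result. For soundness (the right-to-left direction, provable implies valid) I would induct on the height of a natural deduction derivation of $\Gamma \vdash \psi$ and check that every rule of Figure \ref{table-bl} preserves the defining inequality $\llbracket\phi_1\rrbracket \otimes \cdots \otimes \llbracket\phi_n\rrbracket \le \llbracket\psi\rrbracket$ in an arbitrary $\BL$-algebra. Ax is reflexivity of $\le$; W uses integrality $\llbracket\phi\rrbracket \le 1$ together with monotonicity of $\otimes$; Ex uses commutativity. The two implication rules are precisely the residuation adjunction $x\otimes y\le z \iff x\le y\to z$, which is the semantic content of Proposition \ref{othersimple}, while the $\otimes$ rules use associativity and neutrality of the monoid. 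The $\land$ and $\lor$ rules are the universal properties of meet and join; $\bot$E uses $\bot \le x$; DIV transfers across the divisibility identity $x\otimes(x\to y)=y\otimes(y\to x)$ taken with $x=\llbracket\phi\rrbracket$, $y=\llbracket\psi\rrbracket$; and Prelin is immediate from $(x\to y)\lor(y\to x)=1$.

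For completeness (the left-to-right direction) I would first collapse the sequent to a single formula. Applying residuation $n$ times and then using integrality (so that $1$ is the top element), the inequality $\llbracket\phi_1\rrbracket\otimes\cdots\otimes\llbracket\phi_n\rrbracket\le\llbracket\psi\rrbracket$ holds in a given $\BL$-algebra iff $\llbracket\phi_1\to(\cdots\to(\phi_n\to\psi))\rrbracket=1$ there. Hence the sequent is valid iff the single formula $\phi_1\to\cdots\to\phi_n\to\psi$ is valid, whereas by Proposition \ref{BL-equiv} the sequent is provable iff $\vdash_{\BL_{H}}\phi_1\to\cdots\to\phi_n\to\psi$. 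It therefore suffices to show that any formula evaluating to $1$ in every $\BL$-algebra is a theorem of $\BL_{H}$.

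This formula-level completeness I would obtain by the Lindenbaum--Tarski construction. Setting $\phi\equiv\chi$ iff $\vdash_{\BL_{H}}\phi\to\chi$ and $\vdash_{\BL_{H}}\chi\to\phi$, rule R1 and the axioms show that $\equiv$ is a congruence for all connectives, and the quotient ordered by the provability order $[\phi]\le[\chi]$ iff $\vdash_{\BL_{H}}\phi\to\chi$ (Proposition \ref{BL-equiv}) inherits operations from the connectives. The main obstacle is confirming that this quotient is genuinely a $\BL$-algebra: integrality and the bound come from (A1) and (A13), residuation from (A5)/(A6), the lattice structure from (A9)--(A12), divisibility from (A7)/(A8), and prelinearity from (A14). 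Once these are checked, the canonical valuation $\llbracket p\rrbracket:=[p]$ satisfies $\llbracket\chi\rrbracket=[\chi]$ for every formula $\chi$, so if $\chi$ is valid then $[\chi]=[\top]$, which by definition of $\equiv$ means $\vdash_{\BL_{H}}\chi$. Reading the reduction of the previous paragraph backwards then yields provability of the original sequent. Equivalently, since $\BL_{H}$ is exactly Haj\'ek's system, one may simply invoke his algebraic completeness theorem \cite{Hajek1998-HAJMOF} for this formula-level statement and splice it into the same reduction.
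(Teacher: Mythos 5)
Your proposal is correct, but it is worth noting that the paper itself offers no proof of this proposition at all: it is stated immediately after the definition of validity with a citation to Hajek, i.e.\ it is imported as the classical algebraic adequacy theorem for $\BL$. What your sketch adds, and what the bare citation quietly glosses over, is the bridge between the two formulations: Hajek's theorem is stated for the Hilbert system $\BL_{H}$ and single formulas, whereas the proposition concerns sequents in the natural deduction calculus of Figure \ref{table-bl}. Your reduction---residuation plus integrality to collapse $\phi_1,\ldots,\phi_n \vdash \psi$ to validity of $\phi_1 \to \cdots \to \phi_n \to \psi$, combined with Proposition \ref{BL-equiv} to move between $\vdash_{\mbox{\scriptsize{$\BL$}}}$ and $\vdash_{\scriptsize{\BL_{H}}}$---is exactly the missing glue, and indeed it is the same reduction the paper later performs inside its completeness lemma for the LBM semantics. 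Your two routes for the formula-level statement (Lindenbaum--Tarski from scratch, or invoking Hajek) are both fine; the second coincides with the paper's intent, while the first makes the proposition self-contained at the cost of verifying the congruence and the $\BL$-algebra axioms in the quotient. One small point in the soundness half: for the $\lor$E case the ``universal property of join'' alone does not suffice; you need that $\otimes$ distributes over $\lor$, i.e.\ $(x \lor y) \otimes z = (x \otimes z) \lor (y \otimes z)$, which holds in any commutative residuated lattice because $\otimes$, having a right adjoint, preserves joins. With that noted, the rule-by-rule check (integrality for W, commutativity for Ex, residuation for the $\to$ rules, the divisibility identity for DIV, prelinearity for Prelin) goes through as you describe.
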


\section{Kripke Semantics for \BL}\label{sec-semantics}

\begin{note}
The Kripke semantics for $\BL$ that we propose is a restriction of our semantics introduced in \cite{Robinson2021-LEWKSF}. We first need to define a particular class of functions from the set of worlds $W$ to MV-chains.
\end{note} 

\begin{definition}[Sloping functions] Let ${\cal W} = \langle W, \succeq \rangle$ be a linear order and $[0,1]_{\MV}$ a $\BL$-algebra. A function $f \colon W \to [0,1]_{\MV}$ is said to be a \emph{sloping function for $\BL$}  (hereon  \emph{sloping function}, or  \emph{sloping}) if $f(w) > \bot$ implies $\forall v \succ w (f(v) = \top)$.
\end{definition}


\begin{lemma} \label{lemma-monotone function} If $f \colon W \to [0,1]_{\MV}$ and $g \colon W \to [0,1]_{\MV}$ are sloping, then the following functions are also sloping:

\[
\begin{array}{lcl}
    (f \wedge g)(w) & := & \min \{ f w, g w \} \\[2mm]
	(f \vee g)(w) & := & \max \{ f w, g w \} \\[2mm]
	(f \otimes g)(w) & := & \max \{ 0, \overline{\overline{f w} + \overline{g w}} \}	
\end{array}
\]
\end{lemma}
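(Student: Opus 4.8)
The plan is to verify the sloping condition for each of the three operations by a direct case analysis on the value of the combined function at a world $w$, using the hypothesis that $f$ and $g$ are individually sloping. Recall that a function $h$ is sloping precisely when $h(w) > \bot$ forces $h(v) = \top$ for every $v \succ w$. So for each operation I would fix an arbitrary $w$ with the combined value strictly above $\bot$, deduce what this says about $f(w)$ and $g(w)$, and then push those consequences up to an arbitrary $v \succ w$.

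First I would treat $f \wedge g$. Here $(f \wedge g)(w) = \min\{fw, gw\} > \bot$ holds iff \emph{both} $fw > \bot$ and $gw > \bot$, since $\bot = 0$ is the bottom of the chain and the minimum exceeds it exactly when both arguments do. By the sloping property of $f$ and of $g$ separately, for every $v \succ w$ we get $f(v) = \top$ and $g(v) = \top$, whence $(f \wedge g)(v) = \min\{\top, \top\} = \top$. The case of $f \vee g$ is the mirror image: $(f \vee g)(w) = \max\{fw, gw\} > \bot$ iff \emph{at least one} of $fw, gw$ exceeds $\bot$. Whichever one does is sloping, so that one equals $\top$ at every $v \succ w$; since $\top$ is the top element of the chain, the maximum at $v$ is then $\top$ regardless of the other argument. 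Thus $(f \vee g)(v) = \top$, as required.

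The multiplicative case $f \otimes g$ is the only one that needs a genuine computation, and I expect it to be the main (though still mild) obstacle, since the monoidal product can sit strictly below both of its arguments and does not reduce to a bare min or max. Using the note that $x \otimes y = \max\{0, x + y - 1\}$, I would argue as follows: if $(f \otimes g)(w) > \bot$, then $fw + gw - 1 > 0$, so in particular both $fw > \bot$ and $gw > \bot$ (each being at least $1 - gw$ respectively $1 - fw$, which is positive). Then $f$ and $g$ sloping give $f(v) = g(v) = \top = 1$ for every $v \succ w$, and hence $(f \otimes g)(v) = \max\{0, 1 + 1 - 1\} = 1 = \top$.

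Finally I would note that in all three cases the conclusion $h(v) = \top$ at every $v \succ w$ is exactly the defining consequent of the sloping property, so each of $f \wedge g$, $f \vee g$ and $f \otimes g$ is sloping, completing the proof. The whole argument hinges on the single observation that $\top$ is absorbing for $\vee$ and $\otimes$ (and neutral enough for $\wedge$ when both inputs reach $\top$), so that once the relevant inputs saturate to $\top$ above $w$, so does the output.
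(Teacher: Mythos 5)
Your proof is correct and is essentially the intended argument: the paper itself gives no details for this lemma, simply deferring to \cite{Robinson2021-LEWKSF}, where the same pointwise verification (fix $w$ with combined value above $\bot$, deduce the corresponding condition on $f(w)$ and $g(w)$, and push $\top$ to all $v \succ w$) is carried out for the poset setting. One cosmetic slip in your $\otimes$ case: $1 - g(w)$ is only non-negative, not necessarily positive (it vanishes when $g(w) = \top$), but your inequality $f(w) > 1 - g(w) \geq 0$ still yields $f(w) > \bot$, so the argument stands.
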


\begin{proof} As in \cite{Robinson2021-LEWKSF}. 
%
\end{proof}

\begin{definition} \label{def:inf} Let $\lfloor \cdot \rfloor$ be the usual ``floor" operation on the standard MV-chain $[0,1]_{{\rm MV}}$, corresponding to the case distinction
\[
\lfloor x \rfloor := \left\{
    \begin{array}{ll}
         \top & \mbox{if} \quad  x = \top \\[2mm]
         \bot & \mbox{if} \quad x < \top
    \end{array}
\right.
\]
which is known as the ``Monteiro-Baaz $\Delta$-operator''. Given a (not necessarily sloping) function $f \colon W \to [0,1]$ and a $w \in W$, let us write $\inf_{v \succeq w}$ for the following construction:
\[ \cinf_{v \succeq w} f(v) := \min \{ f(w) , \inf_{v \succ w} \lfloor f(v) \rfloor \}  \]
where $\inf_{v \succ w} \lfloor f(v) \rfloor$ is the infimum of the set $\{ \lfloor f(v) \rfloor  : v \succ w \} \subseteq [0,1]$.
\end{definition}

\begin{lemma} \label{inf-sloping} This definition of $\inf_{v \succeq w}$ can also be equivalently written as
\[
\cinf_{v \succeq w} f(v) := \left\{
    \begin{array}{ll}
         f(w) & \mbox{if} \quad \forall v \succ w (f(v) = \top) \\[2mm]
         \bot & \mbox{if} \quad \exists v \succ w (f(v) < \top)
    \end{array}
\right.
\]
and for any $f \colon W \to [0,1]$ the function $\lambda w . \cinf_{v \succeq w} f(v)$ is a sloping function.
\end{lemma}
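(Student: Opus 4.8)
The plan is to prove the two claims in sequence. The claim has two parts: first, that the explicit case-split formula for $\cinf_{v \succeq w} f(v)$ agrees with the definition via the minimum given in Definition \ref{def:inf}; and second, that the resulting function $\lambda w . \cinf_{v \succeq w} f(v)$ is sloping. I expect the first part to be a direct unwinding of the floor operation $\lfloor \cdot \rfloor$, and the second part to be the conceptually substantive step, though still short.

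For the equivalence of the two formulas, I would reason directly from the definition $\cinf_{v \succeq w} f(v) = \min \{ f(w), \inf_{v \succ w} \lfloor f(v) \rfloor \}$. The key observation is that $\lfloor f(v) \rfloor \in \{\bot, \top\}$ for every $v$, so the set $\{ \lfloor f(v) \rfloor : v \succ w \}$ is a subset of $\{\bot, \top\}$ and its infimum is therefore either $\top$ (when the set is empty or equals $\{\top\}$) or $\bot$ (when it contains $\bot$). I would split into the two cases of the displayed formula. In the case $\forall v \succ w\, (f(v) = \top)$, every floor equals $\top$, so $\inf_{v \succ w} \lfloor f(v) \rfloor = \top$, and hence $\min\{ f(w), \top \} = f(w)$, since $\top = 1$ is the top element of the MV-chain. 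In the case $\exists v \succ w\, (f(v) < \top)$, that witness contributes $\lfloor f(v) \rfloor = \bot$ to the infimum, forcing $\inf_{v \succ w} \lfloor f(v) \rfloor = \bot$, whence $\min\{ f(w), \bot \} = \bot$ since $\bot$ is the bottom element. This establishes the case-split formula.

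For the sloping claim, write $g(w) := \cinf_{v \succeq w} f(v)$ and recall that sloping means $g(w) > \bot$ implies $\forall v \succ w\, (g(v) = \top)$. I would verify the contrapositive-friendly form directly: suppose $g(w) > \bot$. By the case-split formula just established, $g(w) > \bot$ is incompatible with the second case, so we must be in the first case, i.e. $\forall v \succ w\, (f(v) = \top)$. Now fix any $v \succ w$; I must show $g(v) = \top$. Applying the case-split formula at $v$, I need to determine whether $\forall u \succ v\, (f(u) = \top)$. Here I use linearity of $\succeq$: since the order is linear and $v \succ w$, any $u \succ v$ also satisfies $u \succ w$ by transitivity, so $f(u) = \top$ holds by the hypothesis of the first case at $w$. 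Hence $v$ also falls in the first case, giving $g(v) = f(v)$; and since $v \succ w$, the first-case hypothesis at $w$ gives $f(v) = \top$, so $g(v) = \top$ as required.

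The main obstacle — really the only place where care is needed — is ensuring the transitivity step is clean: the sloping condition at $w$ quantifies over strict successors of $w$, whereas the sloping condition to be proved at $v$ quantifies over strict successors of $v$, and one must check that the former covers the latter. This is exactly where linearity (or at least transitivity of the strict order) of $\cal W$ is invoked, and it is the feature that makes the infimum collapse to the top value propagate upward through the chain. The remaining manipulations are routine properties of $\min$, $\top$, and $\bot$ in the MV-chain.
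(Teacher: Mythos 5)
Your proof is correct and follows essentially the same route as the paper's: the same two-case analysis ($\forall v \succ w\,(f(v) = \top)$ versus $\exists v \succ w\,(f(v) < \top)$) to establish the equivalence of the two formulas, and then the same direct verification that $\cinf_{v \succeq w} f(v) > \bot$ forces the first case at $w$ and hence the value $\top$ at every strict successor. The only difference is presentational: you spell out the transitivity step (that $u \succ v \succ w$ implies $u \succ w$) which the paper uses silently, and you correctly observe that only transitivity, not full linearity, is needed there.
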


\begin{proof} First let us show that this is an equivalent definition. Consider two cases: \\[1mm]
{\bf Case 1}. $\forall v \succ w (f(v) = \top)$. In this case $\inf_{v \succ w} \lfloor f(v) \rfloor = \top$ and hence
\[ \cinf_{v \succeq w} f(v) =  \min \{ f(w) , \top  \} = f(w) \] 
{\bf Case 2}. $\exists v \succ w (f(v) < \top)$. In this case $\inf_{v \succ w} \lfloor f(v) \rfloor = \bot$
\[ \inf_{v \succeq w} f(v) =  \min \{ f(w) , \bot  \} = \bot \] 
In order to see that $\lambda w . \cinf_{v \succeq w} f(v)$ is a sloping function, assume that for some $w$ we have $\cinf_{v \succeq w} f(v) > \bot$, and let $w' \succ w$. By definition we have that $\forall v \succ w (f(v) = \top)$, and hence $f(w') = \top$ and $\forall v \succ w' (f(v) = \top)$, which implies $\cinf_{v \succeq w'} f(v) = \top$.
\end{proof}

\begin{definition} Let $[0,1]_{\MV}$ be a $\MV$-algebra.  A \emph{Linear Bova-Montagna structure} for $[0,1]_{\MV}\label{def-gbm-structure}$ (or LBM-structure) is a pair $\mathcal{M}_{[0,1]_{\MV}} = \langle {\cal W}, \LBMmodels \rangle$ where ${\cal W} = \langle W, \succeq \rangle$ is a linear order, and $\LBMmodels$ is an infix operator (on worlds and propositional variables) taking values in $[0,1]_{\MV}$, i.e. $(w \LBMmodels p) \in [0,1]_{\MV}$, such that for any propositional variable $p$ the function $\lambda w . (w \LBMmodels p) \colon W \to [0,1]_{\MV}$ is a sloping function.
\end{definition}

\begin{definition}[LBM Kripke Semantics for $\mathcal{L}_\otimes$] \label{def:kripke-sem} Given a LBM-structure $$\mathcal{M}_{[0,1]_{\MV}} = \langle {\cal W}, \LBMmodels \rangle$$ the valuation function $w \LBMmodels p$ on propositional variables $p$ can be extended to all $\mathcal{L}_\otimes$-formulas as:
\[
\begin{array}{lcl}
    w \LBMmodels \top & := & \top \\[1mm]
	w \LBMmodels \bot & := & \bot \\[1mm]
	w \LBMmodels \phi \wedge \psi & := & (w \LBMmodels \phi) \wedge (w \LBMmodels \psi) \\[1mm]
	w \LBMmodels \phi \vee \psi & := & (w \LBMmodels \phi) \vee (w \LBMmodels \psi) \\[1mm]
	w \LBMmodels \phi \otimes \psi & := & (w \LBMmodels \phi) \otimes (w \LBMmodels \psi) \\[1mm]
	w \LBMmodels \phi \to \psi & := & \cinf_{v \succeq w}((v \LBMmodels \phi) \to (v \LBMmodels \psi))
\end{array}
\]
where the operations on the right-hand side are the operations on $[0,1]_{\BL}$.
\end{definition}


\begin{lemma}\label{monotonicity-lemma-logical} For any formula $\phi$ the function $\lambda w . (w \LBMmodels \phi) \colon W \to [0,1]_{\MV}$ is a sloping function.
\end{lemma}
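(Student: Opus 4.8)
The plan is to proceed by structural induction on $\phi$, verifying at each step that $\lambda w.(w \LBMmodels \phi)$ satisfies the sloping condition, namely that $(w \LBMmodels \phi) > \bot$ implies $(v \LBMmodels \phi) = \top$ for all $v \succ w$. The base cases are immediate: when $\phi$ is a propositional variable $p$, the function $\lambda w.(w \LBMmodels p)$ is sloping by the very definition of an LBM-structure; when $\phi$ is $\top$ the function is constantly $\top$, which is sloping since every successor value is $\top$; and when $\phi$ is $\bot$ the function is constantly $\bot$, which is vacuously sloping because the antecedent $(w \LBMmodels \bot) > \bot$ never holds.

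For the inductive cases governing $\wedge$, $\vee$, and $\otimes$, the induction hypothesis gives that $\lambda w.(w \LBMmodels \phi)$ and $\lambda w.(w \LBMmodels \psi)$ are sloping, and the semantic clauses of Definition \ref{def:kripke-sem} define $w \LBMmodels \phi \wedge \psi$, $w \LBMmodels \phi \vee \psi$, and $w \LBMmodels \phi \otimes \psi$ as exactly the pointwise $\wedge$, $\vee$, and $\otimes$ of these two functions. Lemma \ref{lemma-monotone function} asserts precisely that these pointwise combinations of sloping functions are again sloping, so all three cases close at once.

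The implication case is the step worth dwelling on, since its argument differs in character from the others. Here $w \LBMmodels \phi \to \psi = \cinf_{v \succeq w}\big((v \LBMmodels \phi) \to (v \LBMmodels \psi)\big)$, and pointwise residuation does \emph{not} in general preserve slopingness, so one cannot simply appeal to Lemma \ref{lemma-monotone function} as in the previous cases. This is in fact the reason the $\cinf$ construction is built into the semantic clause at all. The key observation is that Lemma \ref{inf-sloping} already guarantees that $\lambda w.\cinf_{v \succeq w} g(v)$ is sloping for an \emph{arbitrary} function $g \colon W \to [0,1]$, with no slopingness assumption on $g$ whatsoever. Instantiating this with $g(v) := (v \LBMmodels \phi) \to (v \LBMmodels \psi)$ yields the conclusion directly, so this case does not even require the induction hypothesis.

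The main obstacle, then, is not computational but conceptual: recognising that the connectives split into two groups — those whose slopingness is inherited pointwise via Lemma \ref{lemma-monotone function}, and implication, whose slopingness is instead manufactured by the $\cinf$ operator and supplied wholesale by Lemma \ref{inf-sloping}. Once this distinction is drawn, every case of the induction is discharged by one of the two preparatory lemmas and the argument is complete.
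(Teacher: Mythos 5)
Your proof is correct and follows essentially the same route as the paper's: structural induction where the $\wedge$, $\vee$, $\otimes$ cases are discharged pointwise (via Lemma \ref{lemma-monotone function}) and the $\to$ case is supplied by Lemma \ref{inf-sloping}, which indeed needs no slopingness hypothesis on its input. Your write-up is merely more explicit than the paper's two-line proof, and your observation that the implication case bypasses the induction hypothesis entirely is a nice touch that the paper leaves implicit.
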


\begin{proof} By induction on $\phi$. The cases for $\psi \vee \xi, \psi \wedge \xi$ and $\psi \otimes \xi$ follow directly from Definition \ref{def:kripke-sem}. The case for $\psi \to \xi$ follows from \ref{inf-sloping}.
\end{proof}


%

\begin{lemma}(The sloping functions are linearly ordered in LBM's.)\label{sloping-functions-are-linearly-ordered} Let $f,g: W \to [0,1]_{\MV}$ be sloping for $\BL$. Then:

$$\forall v \succeq w: (f(v) \ge g(v)) \lor \forall v \succeq w: (g(v) \ge f(v))$$
\end{lemma}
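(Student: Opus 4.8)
The plan is to fix an arbitrary $w \in W$ and argue by contradiction, exploiting the fact that a sloping function can exceed $\bot$ only just before it saturates to $\top$ further up the order. Suppose neither disjunct holds. Since $[0,1]_{\MV}$ is totally ordered, the failure of the first disjunct yields some $v_1 \succeq w$ with $f(v_1) < g(v_1)$, and the failure of the second yields some $v_2 \succeq w$ with $g(v_2) < f(v_2)$. Both inequalities are strict, which is exactly the leverage the argument needs.

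The key observation I would isolate first is the following immediate consequence of the sloping condition: if $h$ is sloping and $h(v) > \bot$, then $h(u) = \top$ for every $u \succ v$. This is just a restatement of the definition, but stating it separately makes the crossing argument transparent, since it says that once a sloping function is positive at a world it is already maximal at every strictly later world.

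Next I would use linearity of $\mathcal{W} = \langle W, \succeq \rangle$ to compare $v_1$ and $v_2$, splitting into three cases. If $v_1 = v_2$, then $f(v_1) < g(v_1)$ and $g(v_1) < f(v_1)$ hold simultaneously, an immediate contradiction. If $v_2 \succ v_1$, then $f(v_1) < g(v_1)$ forces $g(v_1) > \bot$, so the observation applied to $g$ gives $g(v_2) = \top$; but then $g(v_2) < f(v_2) \le \top$ is impossible. Symmetrically, if $v_1 \succ v_2$, then $g(v_2) < f(v_2)$ forces $f(v_2) > \bot$, whence $f(v_1) = \top$, contradicting $f(v_1) < g(v_1) \le \top$. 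In every case we reach a contradiction, so one of the two disjuncts must hold for the chosen $w$, and since $w$ was arbitrary the claim follows.

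I do not expect a serious obstacle here; the proof is short once the saturation observation is in place. The only points requiring care are keeping the two witnessing inequalities strict (which they are, since the codomain is totally ordered), correctly orienting $\succ$ so that the \emph{later} of the two worlds is the one forced to $\top$, and not overlooking the degenerate case $v_1 = v_2$.
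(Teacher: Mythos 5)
Your proof is correct and rests on the same key fact as the paper's: a strict inequality $f(v) < g(v)$ forces $g(v) > \bot$, so the sloping condition saturates $g$ to $\top$ at every strictly later world, which is incompatible with a reverse strict inequality further up the chain. The paper organizes this as a direct proof of the classically equivalent implication $\neg\forall v \succeq w\,(f(v) \ge g(v)) \Rightarrow \forall v \succeq w\,(g(v) \ge f(v))$ using a single witness (checking worlds below, at, and above it), whereas you derive a contradiction from two witnesses compared under linearity; the difference is purely one of presentation.
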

\begin{proof} We prove $\neg\forall v \succeq w: (f(v) \ge g(v)) \Rightarrow \forall v \succeq w: (g(v) \ge f(v))$ as this is classically equivalent to the above statement. So assume that $\neg\forall v \succeq w: (f(v) \ge g(v))$. Then $\exists v \succeq w: (f(v) < g(v))$. But then $g(v) > \bot$; and since $f,g$ are sloping, this means for any $v\prime \succ v$ we have $g(v\prime) = \top$ and so $g(v\prime) \ge f(v\prime)$. On the other hand, for any $v\prime \prec v$, $f(v\prime) = \bot$ as $f$ is sloping, and since this is the least element of the ordering, in particular we have $g(v\prime) \ge f(v\prime)$. In either case, we have $\forall v \succeq w: (g(v) \ge f(v))$ as desired.
\end{proof}

We can now generalise the monotonicity property of G{\"o}del-Dummett logic (under linearly-ordered Kripke frames) to $\BL$:

\begin{corollary}[Monotonicity] The following (generalised) monotonicity property holds for all $\mathcal{L}_\otimes$-formulas $\phi$, i.e.
\[ \mbox{if $w \preceq v$ then $(w \LBMmodels \phi) \leq (v \LBMmodels \phi)$} \]
\end{corollary}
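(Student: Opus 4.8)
The plan is to reduce the statement entirely to the sloping property, which has already been pushed through the whole language by Lemma~\ref{monotonicity-lemma-logical}. That lemma tells us that for every $\mathcal{L}_\otimes$-formula $\phi$ the map $f := \lambda w.(w \LBMmodels \phi) \colon W \to [0,1]_{\MV}$ is a sloping function. Hence the corollary follows the moment one observes the general fact that \emph{any} sloping function is monotone with respect to $\preceq$, i.e.\ $w \preceq v$ implies $f(w) \le f(v)$. So I would phrase the proof as: fix $\phi$, invoke Lemma~\ref{monotonicity-lemma-logical} to get that $f$ is sloping, and then argue monotonicity of $f$ abstractly.

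First I would dispose of the trivial case $w = v$, where the inequality is an equality. For the remaining case $v \succ w$ I would split on the value $f(w)$. If $f(w) = \bot$, then $f(w) = \bot \le f(v)$ immediately, since $\bot$ is the least element of $[0,1]_{\MV}$. If instead $f(w) > \bot$, the defining condition of a sloping function gives $\forall u \succ w\,(f(u) = \top)$; as $v \succ w$, in particular $f(v) = \top$, whence $f(w) \le \top = f(v)$. These two cases are exhaustive and both yield $f(w) \le f(v)$, which is exactly $(w \LBMmodels \phi) \le (v \LBMmodels \phi)$.

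I do not expect a genuine obstacle: all the real work is front-loaded into the definition of sloping function and into the inductive argument of Lemma~\ref{monotonicity-lemma-logical} (whose only nontrivial clause, the implication connective, is handled by Lemma~\ref{inf-sloping}). The one point demanding care is the order convention, since $\mathcal{W} = \langle W, \succeq\rangle$ and ``$w \preceq v$'' means ``$v \succeq w$'': I must make sure that the sloping clause, which constrains the values at worlds lying strictly \emph{above} $w$ (those $u$ with $u \succ w$), is precisely the data that controls the value as we move up from $w$ to $v$. Once the direction is pinned down, the case split above is forced and the argument is immediate; notably, Lemma~\ref{sloping-functions-are-linearly-ordered} is not needed for this particular corollary.
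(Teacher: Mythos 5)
Your proposal is correct and takes essentially the same route as the paper: the paper's proof is exactly the observation that valuations are sloping functions (Lemma~\ref{monotonicity-lemma-logical}) and that sloping functions are monotone. You merely spell out the second half (the case split showing any sloping function is monotone), which the paper leaves implicit.
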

\begin{proof}
This follows from the observation that the valuations are sloping functions, which are in turn monotone functions.
\end{proof} 
\section{Validity under LBM structures}

\begin{definition} \label{def:model} Let $\Gamma = \psi_1, \ldots, \psi_n$. Consider the following definitions:
\begin{itemize}
    \item We say that a sequent $\Gamma \vdash \phi$ \emph{holds} in a LBM-structure $\mathcal{M}$ (written $\Gamma \LBMmodels_{\mathcal{M}} \phi$) if for all $w \in W$ we have 
    $$(w \LBMmodels \psi_1 \otimes \ldots \otimes \psi_n) \leq (w \LBMmodels \phi)$$
    
    Otherwise (i.e. if $\Gamma \not \vdash \phi$), we say that the sequent \emph{fails} $\mathcal{M}$ (written $\Gamma \not\GBMmodels_{\mathcal{M}} \phi$) and this means:

    $$\exists w \in W: (w \LBMmodels \psi_1 \otimes \ldots \otimes \psi_n) > (w \LBMmodels \phi)$$
    
    \item A sequent $\Gamma \vdash \phi$ is said to be valid under the LBM Kripke semantics for $\mathcal{L}_\otimes$ (written $\Gamma \LBMmodels \phi$) if $\Gamma \LBMmodels_{\mathcal{M}} \phi$ for all LBM-structures $\mathcal{M}$.
\end{itemize}
\end{definition}


\section{LBMs and Linear Kripke structures}
\label{sec-gbm-gen-kripke}

\begin{note}
Linear Bova-Montagna structures generalise linear Kripke structures, i.e. Kripke structures where the frame has a linear ordering.\footnote{We do not provide the definition here, although this can be found in standard textbooks e.g. \cite{priest2008introduction}.} This is because Kripke structures merely require the valuations 
$(w \LBMmodels p) \in [0,1]_{\MV}$ are always in the finite set $\{0, 1\}$ or $\{\bot, \top\}$. These can then be identified with the Booleans. Therefore, any Linear Kripke structure can be seen as a LBM-structure, by defining
\end{note} 
\[ 
w \LBMmodels \phi = \left\{
    \begin{array}{ll}
        \top & \mbox{if $w \LKmodels \phi$} \\[2mm]
        \bot & \mbox{if $w \not\LKmodels \phi$} \\[2mm]
    \end{array}
    \right.
\]

\begin{note}
Recall that $\mathcal{L} \subset \mathcal{L}_\otimes$, so any $\mathcal{L}$-formula is also an $\mathcal{L}_\otimes$-formula. 
\end{note} 

\begin{thm} \label{thm-general} For any Linear Kripke structure $\mathcal{LK} = \langle \mathcal{W}, \Kmodels \rangle$ and $\mathcal{L}$-formula $\phi$, we have $\forall w$:
\[ w \LKmodels \phi \quad \mbox{iff} \quad (w \LBMmodels \phi) = \top \]
\end{thm}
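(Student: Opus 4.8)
The plan is to prove both directions of the biconditional simultaneously by structural induction on the $\mathcal{L}$-formula $\phi$, establishing the stronger two-part statement that for all $w$: (i) $w \LKmodels \phi$ implies $(w \LBMmodels \phi) = \top$, and (ii) $w \not\LKmodels \phi$ implies $(w \LBMmodels \phi) = \bot$. Phrasing it this way makes the induction self-feeding, since the inductive hypothesis then tells us that each subformula takes only the values $\top$ or $\bot$ under $\LBMmodels$, matching exactly the $\{0,1\}$-valued fragment on which the MV-operations collapse to the Boolean ones.

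For the base cases I would treat propositional variables and $\bot$ directly: the translation defining $w \LBMmodels \phi$ from $w \LKmodels \phi$ gives the variable case immediately, and $\bot$ is handled by the clause $w \LBMmodels \bot = \bot$ together with the fact that $w \not\LKmodels \bot$ always. For the inductive step on $\phi \wedge \psi$ and $\phi \vee \psi$, I would invoke the induction hypothesis to reduce each conjunct/disjunct to $\top$ or $\bot$, and then observe that on $\{\bot,\top\}$ the operations $\min$ and $\max$ from Definition \ref{def:kripke-sem} agree with the classical Kripke clauses for $\wedge$ and $\vee$; a short case split on the truth values of the immediate subformulas closes these cases. Note that because $\phi$ is an $\mathcal{L}$-formula, the connective $\otimes$ never appears, so I need only handle $\wedge$, $\vee$, $\to$, variables, and $\bot$.

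The implication case $\phi \to \psi$ is where the real content lies, and I expect it to be the main obstacle. Here I would unfold $w \LBMmodels \phi \to \psi = \cinf_{v \succeq w}\bigl((v \LBMmodels \phi) \to (v \LBMmodels \psi)\bigr)$ and use the equivalent characterization of $\cinf_{v \succeq w}$ from Lemma \ref{inf-sloping}: this value equals the local term $(w \LBMmodels \phi)\to(w \LBMmodels \psi)$ if every strictly later world forces the inner implication to $\top$, and equals $\bot$ otherwise. By the induction hypothesis, at each world $v$ both $v \LBMmodels \phi$ and $v \LBMmodels \psi$ lie in $\{\bot,\top\}$, so the MV-implication $(v \LBMmodels \phi)\to(v \LBMmodels \psi)$ takes value $\top$ exactly when $v \LKmodels \phi$ implies $v \LKmodels \psi$, and $\bot$ precisely when $v \LKmodels \phi$ but $v \not\LKmodels \psi$. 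I would then match this against the classical forcing clause, namely $w \LKmodels \phi \to \psi$ iff for all $v \succeq w$, $v \LKmodels \phi$ implies $v \LKmodels \psi$. The key observation reconciling the two is that $\cinf$ already ranges over $v \succeq w$ (folding in the local world via the $\min$ with $f(w)$), so the condition ``$w \LBMmodels \phi \to \psi = \top$'' holds iff the inner implication is $\top$ at $w$ \emph{and} at every $v \succ w$, which is exactly the persistence/hereditary condition defining classical forcing of $\to$ on a linear frame. I would verify both directions by contraposition: if classical forcing fails at some $v \succeq w$ there is a witness where $\phi$ holds and $\psi$ fails, driving the inner term to $\bot$ and hence $\cinf$ to $\bot$; conversely if $\cinf = \bot$ then Lemma \ref{inf-sloping} supplies such a witness, contradicting classical forcing.

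The delicate point to handle carefully is the interplay between the strict-versus-nonstrict ranges in the definition of $\cinf$ (which applies the floor only to strictly later worlds) and the nonstrict universal quantifier in the classical clause for $\to$. Because every subformula value is already in $\{\bot,\top\}$ by the induction hypothesis, the floor operation $\lfloor\cdot\rfloor$ acts as the identity on these values, so no information is lost at the strictly-later worlds, and the separate treatment of the world $w$ itself via the outer $\min$ exactly accounts for the $v = w$ case of the classical quantifier. Once this bookkeeping is made explicit, the equivalence follows, and I would remark that this case is precisely where the semantics specializes to the familiar Gödel--Dummett/intuitionistic reading of implication over linear frames.
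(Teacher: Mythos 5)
Your proof is correct and follows essentially the same route as the paper's: structural induction on $\phi$ with the real work concentrated in the $\to$ case, where Boolean-valuedness of the subformula values makes the MV-implication and the $\cinf_{v \succeq w}$ operator collapse to the classical forcing clause over the linear frame. Your strengthened two-part induction hypothesis (forcing failure yields value $\bot$, not merely a value different from $\top$) is in fact a more careful treatment of a fact the paper simply asserts without proof --- that $(v \LBMmodels \psi) \in \{\bot, \top\}$ for all $\mathcal{L}$-formulas on Linear Kripke structures --- so your version is, if anything, slightly tighter than the original.
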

\begin{proof} By induction on the complexity of the formula $\phi$. The base case follows by definition. \\[2mm]
%
%
{\bf Induction step}: We consider the important case. Suppose the result holds for all sub-formulas of $\phi$: \\[1mm]
%
%
{\bf $\to$ Case.} $\phi = \psi \to \chi$. We use the fact that when restricted to Linear Kripke structures, $(v \LBMmodels \psi) \in \{ \top, \bot \}$ and $(v \LBMmodels \chi) \in \{ \top, \bot \}$, and hence
    \begin{itemize}
        \item[(i)]\label{item:compositionality} $\forall v \succeq w (((v \LBMmodels \psi) = \top) \to ((v \LBMmodels \chi) = \top))  \Leftrightarrow \forall v \succeq w ((v \LBMmodels \psi) \to (v \LBMmodels \chi)) = \top$
        \item[(ii)]\label{item:inf} $\forall v \succeq w (((v \LBMmodels \psi) \to (v \LBMmodels \chi)) = \top) \Leftrightarrow {\bf inf}_{v \succeq w} ((v \LBMmodels \psi) \to (v \LBMmodels \chi)) = \top$, i.e. the ${\bf \cinf}_{v \succeq w}$ translates directly into a universally quantifed expression, i.e. it is (again) a standard $\inf_{v \succeq w}$ operation (on a set).
    \end{itemize} 
    Therefore:
    \[
        \begin{array}{lcl}
            w \LKmodels \psi \to \chi 
                & \equiv & \forall v \succeq w ((v \LKmodels \psi) \to (v \LKmodels \chi)) \\[1mm]
                & \stackrel{\textup{(IH)}}{\Leftrightarrow} & 
                   \forall v \succeq w ((v \LBMmodels \psi) = \top \to (v \LBMmodels \chi) = \top) \\[2mm]
                & \stackrel{(\ref{item:compositionality})}{\Leftrightarrow} & \forall v \succeq w ((v \LBMmodels \psi) \to (v \LBMmodels \chi) = \top) \\[2mm]
                & \stackrel{(\ref{item:inf})}{\Leftrightarrow} & 
                   {\bf \cinf}_{v \succeq w} ((v \LBMmodels \psi) \to (v \LBMmodels \chi)) = \top \\[2mm]
                & \equiv & (w \LBMmodels \psi \to \chi) = \top
        \end{array}
    \]
which concludes the proof.
\end{proof}
\par We also note the following:

\begin{proposition}
\label{BM-generalise-LBM}
BM structures of \cite{Robinson2021-LEWKSF} generalise LBM structures.


\end{proposition}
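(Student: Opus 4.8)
The plan is to exhibit every LBM structure as a special case of a BM structure from \cite{Robinson2021-LEWKSF} and then to verify that the two semantics assign the same value to every $\mathcal{L}_\otimes$-formula at every world, so that the notion of holding and hence of validity transfers. First I would recall the definition of a BM structure: there the frame $\langle W, \succeq \rangle$ is an arbitrary poset rather than a linear order, the valuations are required to be sloping in the same sense (namely $f(w) > \bot$ implies $f(v) = \top$ for all $v \succ w$), and the semantic clauses are pointwise for $\top, \bot, \wedge, \vee, \otimes$ while $\to$ is evaluated by the floor--infimum construction $\cinf_{v \succeq w}$ of Definition \ref{def:inf}. The essential observation is that a linear order is in particular a poset, so the frame of any LBM structure is already an admissible BM frame, and the LBM sloping condition is verbatim the BM sloping condition.

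Second, I would associate to a given LBM structure $\mathcal{M}_{[0,1]_{\MV}} = \langle \mathcal{W}, \LBMmodels \rangle$ the candidate BM structure $\langle \mathcal{W}, \BMmodels \rangle$ obtained by keeping the same set of worlds, the same order $\succeq$ now regarded merely as a partial order, the same codomain $[0,1]_{\MV}$, and the same atomic valuation $w \BMmodels p := w \LBMmodels p$. By the previous paragraph this is a legitimate BM structure, since the only LBM-specific hypothesis dropped is the linearity of $\succeq$, which is only a strengthening of the BM requirement.

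Third, the core of the argument is a routine induction on the structure of $\phi$ establishing $w \BMmodels \phi = w \LBMmodels \phi$ for every world $w$ and every $\mathcal{L}_\otimes$-formula $\phi$. The base case is the shared atomic valuation; the cases for $\wedge, \vee, \otimes$ are immediate because both semantics apply the same pointwise $[0,1]_{\MV}$-operations to the interpretations of the immediate subformulas, which are equal by the induction hypothesis; and the case $\phi = \psi \to \chi$ follows because both semantics evaluate implication by the same operator $\cinf_{v \succeq w}$ applied to the function $\lambda v.\,((v \LBMmodels \psi) \to (v \LBMmodels \chi))$, which coincides pointwise with $\lambda v.\,((v \BMmodels \psi) \to (v \BMmodels \chi))$ by the hypothesis. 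Since the interpretation of every formula coincides at every world, Definition \ref{def:model} yields that a sequent holds in $\mathcal{M}$ under the LBM semantics iff it holds when $\mathcal{M}$ is viewed as a BM structure, which is exactly the sense in which BM structures generalise LBM structures.

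The step I expect to require the most care is reconciling the precise form of the definitions across the two papers, since the BM framework of \cite{Robinson2021-LEWKSF} is designed for $\GBLe$ and may state its sloping condition, its codomain, or its implication clause over a general poset, and possibly over a family of $\MV$-chains indexed by worlds rather than a single fixed $[0,1]_{\MV}$. The substantive check is therefore that specialising the BM definitions to a linear frame with the constant family $[0,1]_{\MV}$ recovers the LBM definitions exactly, and in particular that the poset-theoretic operator appearing in the BM implication clause is literally the $\cinf_{v \succeq w}$ of Definition \ref{def:inf}; once this identification is confirmed, the inductive equality of interpretations and the preservation of validity are immediate.
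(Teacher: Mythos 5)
Your proposal is correct and rests on exactly the same observation as the paper's proof, which is the single sentence that every linear order is a partial order; your additional inductive verification that the two semantics agree formula-by-formula is a routine elaboration the paper leaves implicit, not a different route.
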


\begin{proof}
This follows from the fact that all linear orders are partial orders.
\end{proof}
\section{Soundness}

We now prove the soundness of the Kripke semantics for $\BL$.

\label{sec-soundness}

\begin{thm}[Soundness] If $\Gamma \vdash_{\mbox{\scriptsize{\BL}}} \phi$ then $\Gamma \LBMmodels \phi$.
\end{thm}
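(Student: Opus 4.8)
The plan is to induct on the height of the natural deduction derivation of $\Gamma \vdash_{\BL} \phi$, showing that each rule of Figure \ref{table-bl} preserves the property ``holds in every LBM-structure''. Fixing an arbitrary LBM-structure $\mathcal{M} = \langle \mathcal{W}, \LBMmodels\rangle$ and a world $w$, it suffices by Definition \ref{def:model} to derive the inequality $(w \LBMmodels \psi_1 \otimes \cdots \otimes \psi_n) \leq (w \LBMmodels \phi)$ for each rule's conclusion from the analogous inequalities for its premises, and then quantify over $w$ and $\mathcal{M}$. Throughout I would silently invoke Proposition \ref{othersimple}(2) to read an antecedent context as a single $\otimes$-product, together with the residuation law $x \otimes y \le z \Leftrightarrow x \le y \to z$ and the commutativity and monotonicity of $\otimes$ in the fibre $[0,1]_{\MV}$.

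Several rules fall out immediately from the pointwise lattice structure of $[0,1]_{\MV}$ at the fixed world $w$: Ax is reflexivity; $\land$I, $\land$E, $\lor$I are the universal properties of $\min$ and $\max$; $\otimes$I and $\otimes$E combine monotonicity of $\otimes$ with residuation; $\bot$E uses that $(w\LBMmodels\bot)=\bot$ is least; W uses integrality via $x \otimes y \le x \otimes \top = x$; and Ex is commutativity. For $\lor$E I would exploit that $[0,1]_{\MV}$ is a chain, so at $w$ the value $(w\LBMmodels\phi\lor\psi)$ is literally one of its two disjuncts, reducing the rule to a two-case argument. Prelin instead leans on Lemma \ref{sloping-functions-are-linearly-ordered}: the sloping functions $\lambda v.(v\LBMmodels\phi)$ and $\lambda v.(v\LBMmodels\psi)$ are comparable on the up-set of $w$, so one of $(v\LBMmodels\phi)\to(v\LBMmodels\psi)$, $(v\LBMmodels\psi)\to(v\LBMmodels\phi)$ is constantly $\top$ above $w$, forcing the corresponding $\cinf$, and hence the whole disjunction, to equal $\top$.

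The two genuinely semantic cases are $\to$I and DIV, and I expect $\to$I to be the main obstacle. For $\to$I we are given $(v\LBMmodels\Gamma)\le(v\LBMmodels\phi)\to(v\LBMmodels\psi)$ at \emph{every} world $v$, and must conclude $(w\LBMmodels\Gamma)\le\cinf_{v\succeq w}((v\LBMmodels\phi)\to(v\LBMmodels\psi))$. Residuation at $w$ alone is too weak, since the right-hand side is a floored infimum over the entire up-set of $w$; the argument must globalise the inequality, and the sloping of $\lambda v.(v\LBMmodels\Gamma)$ (Lemma \ref{monotonicity-lemma-logical}, via Lemma \ref{lemma-monotone function}) is what makes this possible. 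Using the case split of Lemma \ref{inf-sloping}: if $(w\LBMmodels\Gamma)=\bot$ the inequality is trivial; if $(w\LBMmodels\Gamma)>\bot$ then sloping forces $(v\LBMmodels\Gamma)=\top$, and hence $(v\LBMmodels\phi)\to(v\LBMmodels\psi)=\top$, for all $v\succ w$, so the $\cinf$ collapses to its value $(w\LBMmodels\phi)\to(w\LBMmodels\psi)$ at $w$, which dominates $(w\LBMmodels\Gamma)$ by residuation. The dual rule $\to$E is then easy, using $\cinf_{v\succeq w} g(v)\le g(w)$ and the counit $(a\to b)\otimes a\le b$.

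For DIV I would isolate a semantic divisibility lemma, $(w\LBMmodels\phi)\otimes(w\LBMmodels\phi\to\psi)=(w\LBMmodels\phi)\wedge(w\LBMmodels\psi)$; since the right-hand side is symmetric in $\phi,\psi$, this converts any derivation of $\Gamma,\phi,\phi\to\psi\vdash\chi$ into one validating $\Gamma,\psi,\psi\to\phi\vdash\chi$. The lemma is proved by the same Lemma \ref{inf-sloping} split: when $(v\LBMmodels\phi)\to(v\LBMmodels\psi)=\top$ for all $v\succ w$ the $\cinf$ is the genuine residual and MV-divisibility $a\otimes(a\to b)=a\wedge b$ applies; otherwise the $\cinf$ is $\bot$, but then some $v\succ w$ has $(v\LBMmodels\phi)>(v\LBMmodels\psi)$, so the contrapositive of sloping applied to $\lambda v.(v\LBMmodels\psi)$ forces $(w\LBMmodels\psi)=\bot$, making both sides $\bot$. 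Assembling all cases completes the induction and hence establishes soundness.
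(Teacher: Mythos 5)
Your proposal is correct and follows essentially the same route as the paper: induction on the derivation, with the Prelin case resolved exactly as the paper resolves it, via the sloping property of valuations (Lemma \ref{monotonicity-lemma-logical}) and the linear ordering of sloping functions (Lemma \ref{sloping-functions-are-linearly-ordered}). The remaining cases you work out explicitly (notably $\to$I via the collapse of $\cinf$, and DIV via a pointwise divisibility identity) are precisely those the paper dispatches by citing the analogous argument in \cite{Robinson2021-LEWKSF}, and your treatment of them is sound.
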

\begin{proof} By induction on the derivation of $\Gamma \vdash \phi$. Assume $\Gamma = \psi_1, \ldots, \psi_n$ and let $\otimes \Gamma := \psi_1 \otimes \ldots \otimes \psi_n$. Fix a LBM-structure $\mathcal{M} = \langle {\cal W}, \LBMmodels \rangle$ with $\mathcal{W} = \langle W, \succeq \rangle$, and let $w \in W$. We exhibit only one case, as the rest of the proof is analogous to that of \cite{Robinson2021-LEWKSF}. \\[1mm]

 (PRELIN) $\Gamma \vdash (\phi \to \psi) \lor (\psi \to \phi)$. By Definition \ref{def:model}, we need to show:
    \[
        \begin{array}{lcl} 
    	w \LBMmodels (\otimes \Gamma)  
		& \stackrel{\footnotesize{(\mbox{L.\ref{def:model}})}}{\leq} & (w \LBMmodels (\phi \to \psi) \lor (\psi \to \phi)) = \top 
	\end{array}
   \]   
 which is equivalent to:
     \[
        \begin{array}{lcl} 
    	w \LBMmodels (\otimes \Gamma)  
		& \stackrel{\footnotesize{(\mbox{L.\ref{std-mv-chain}})}}{\leq} & \max \{ (w \LBMmodels (\phi \to \psi)), (w\LBMmodels (\psi \to \phi))\} = \top 
	\end{array}
   \]   
 where the right of the inequality means: Either $(w \LBMmodels (\phi \to \psi)) = \top$ or $(w\LBMmodels (\psi \to \phi)) = \top$. Here we break into cases. \\
 
 {\bf Case 1.} $(w \LBMmodels (\phi \to \psi)) = \top$. We have:
    \[
        \begin{array}{lcl}
            (w \LBMmodels (\phi \to \psi)) = \top 
                & \equiv & \cinf_{v \succeq w}((v \LBMmodels \phi) \to (v \LBMmodels \psi)) = \top \\[1mm]
                & \stackrel{}{\Leftrightarrow} & 
                   {\forall v: v \succeq w}((v \LBMmodels \phi) \le (v \LBMmodels \psi)) \\[2mm]
        \end{array}
    \]

{\bf Case 2.} $(w\LBMmodels (\psi \to \phi)) = \top$. We have:
    \[
        \begin{array}{lcl}
  (w \LBMmodels (\psi \to \phi)) = \top 
                & \equiv & \cinf_{v \succeq w}((v \LBMmodels \psi) \to (v \LBMmodels \phi)) = \top \\[1mm]
                & \stackrel{}{\Leftrightarrow} & 
                   {\forall v: v \succeq w}((v \LBMmodels \psi) \le (v \LBMmodels \phi)) \\[2mm]
        \end{array}
    \]
 These latter cases show that we must then prove:
\begin{equation}\label{equation:disj} 
\forall v: v \succeq w((v \LBMmodels \phi) \le (v \LBMmodels \psi)) \lor \forall v: v \succeq w((v \LBMmodels \psi) \le (v \LBMmodels \phi))\end{equation}
But by \ref{monotonicity-lemma-logical}, $\lambda w . (v \LBMmodels \phi) \colon W \to [0,1]_{\MV}$ and $\lambda w . (v \LBMmodels \psi) \colon W \to [0,1]_{\MV}$  are sloping functions; and by \ref{sloping-functions-are-linearly-ordered} the sloping functions for LBM's are linearly ordered, so that indeed \ref{equation:disj} above holds.  
\end{proof}

\section{LBM's and Poset Products}
\label{sec:poset-sum} 
\par 
A \emph{poset product} (cf. \cite{BOVA20091143} and \cite{JIPSEN20101559}) is defined over a poset ${\cal W} = \langle W, \preceq \rangle$ as the algebra ${\bf A}_{\cal W}$ of signature $\mathcal{L}_\otimes$ whose elements are sloping functions $f \colon W \to [0,1]_{\MV}$ and operations are defined as below:\\
\[
\begin{array}{lcl}
	(\bot)(w) & := & \bot \\[2mm]
	(f_1 \wedge f_2)(w) & := & \min \{ f_1 w, f_2 w \} \\[2mm]
	(f_1 \vee f_2)(w) & := & \max \{ f_1 w, f_2 w \} \\[2mm]
	(f_1 \otimes f_2)(w) & := & \max \{ 0, \overline{\overline{f_1 w} + \overline{f_2 w}} \} \\[2mm]
	(f_1 \to f_2)(w) & := & \left
	\{
    \begin{array}{ll}
         f_1(w) \to f_2(w) & \mbox{if} \quad \forall v \succ w (f_1(v) \leq f_2(v)) \\[2mm]
         \bot & \mbox{otherwise.}
    \end{array}
    \right.
\end{array}
\]
Since $f_1$ and $f_2$ are sloping functions, we have that
\[ \forall v \succ w (f_1(v) \leq f_2(v)) \quad \Leftrightarrow \quad \forall v \succ w ((f_1(v) \to f_2(v)) = \top) \]
Therefore, this last clause of the definition can be simplified to
\[
\begin{array}{lcl}
	(f_1 \to f_2)(w) & := & \cinf_{v \succeq w} (f_1(v) \to f_2(v))
\end{array}
\]
\begin{definition}[Poset Product semantics for $\mathcal{L}_\otimes$] Let ${\cal W} = \langle W, \preceq \rangle$ be a fixed linearly ordered poset, and ${\bf A}_{\cal W}$ be the poset product described above. Given $h : Atoms \to {\bf A}_{\cal W}$ an assignment of atomic formulas to elements of ${\bf A}_{\cal W}$, any formula $\phi$ can be mapped to an element $\llbracket\phi\rrbracket_h \in {\bf A}_{\cal W}$ as follows:
\[
\begin{array}{lcl}
	\llbracket p \rrbracket_h & := & h(p) \quad (\mbox{for atomic formulas $p$}) \\[2mm]
	\llbracket\bot\rrbracket_h & := & \bot \\[2mm]
	\llbracket \phi \wedge \psi\rrbracket_h & := & \llbracket\phi\rrbracket_h \wedge \llbracket\psi\rrbracket_h \\[2mm]
	\llbracket\phi \vee \psi\rrbracket_h & := & \llbracket\phi\rrbracket_h \vee \llbracket\psi\rrbracket_h \\[2mm]
	\llbracket\phi \otimes \psi\rrbracket_h & := & \llbracket\phi\rrbracket_h \otimes \llbracket\psi\rrbracket_h \\[2mm]
	\llbracket\phi \to \psi\rrbracket_h & := & \llbracket\phi\rrbracket_h \to \llbracket\psi\rrbracket_h
\end{array}
\]
A formula $\phi$ is said to be \emph{valid in ${\bf A}_{\cal W}$ under $h$} if for every $w \in W$
\[ 
\llbracket\phi\rrbracket_{h}^{{\bf A}_{\cal W}}(w) = \top 
\]
(which is $1$ in $[0,1]_{{\rm MV}}$). A formula $\phi$ is said to be \emph{valid in ${\bf A}_{\cal W}$} if it is valid in ${\bf A}_{\cal W}$ under $h$ for any possible mapping $h \colon Atoms \to {\bf A}_{\cal W}$.
\end{definition}
The next proposition follows from a more general theorem stated as Theorem 2.2(1) in \cite{Busaniche2018PosetPA} for poset products where the indexing set is a forest, noting here that every chain is trivially a forest:
\begin{proposition}\label{posetprodBL}
Let ${\mathcal W} = \langle W, \preceq \rangle$ be a linearly-ordered poset and $\{{ A}_w : w\in W\}$ an indexed collection of BL-chains. Then:

$${\bf A}_{\cal W} = \prod_{w \in \langle W, \preceq \rangle} { A}_w$$ 

or the poset product of this collection is a linearly ordered $\BL$-algebra, i.e. a $\BL$-chain.
\end{proposition}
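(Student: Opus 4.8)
The plan is to separate the conclusion into two independent obligations: first, that $\mathbf{A}_{\cal W}$ is a $\BL$-algebra at all; second, that its lattice order is total. The first obligation is the ``forest'' content and is insensitive to linearity, while the second is precisely where the hypothesis that $\langle W, \preceq \rangle$ is a chain does the work. One preliminary remark is in order: although the poset-product definition is stated for the single chain $[0,1]_{\MV}$, for this proposition it must be read in the generalised sense that $f(w) \in A_w$, with the sloping condition and each operation taken relative to the $\top$, $\bot$ and the operations of the individual $\BL$-chain $A_w$.

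For the $\BL$-algebra part, I would first confirm that the carrier (the sloping functions) is closed under all operations. Closure under $\wedge, \vee, \otimes$ is Lemma \ref{lemma-monotone function}, and closure under $\to$ follows after rewriting its defining clause as $\cinf_{v \succeq w}(f_1(v) \to f_2(v))$ (the equivalence the paper records just before the definition) and then applying Lemma \ref{inf-sloping}. Having the operations well-defined, one verifies the defining equations of a $\BL$-algebra. Boundedness and integrality are witnessed by the constant functions $\bot$ and $\top$; since $\wedge, \vee, \otimes$ are pointwise and each $A_w$ is a $\BL$-chain, the lattice and monoid laws, divisibility, and prelinearity all transfer pointwise. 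The one genuinely non-pointwise operation is $\to$, so rather than grind the verification out by hand, the efficient route, and the one the paper takes, is to invoke Theorem~2.2(1) of \cite{Busaniche2018PosetPA} (equivalently the poset-product machinery of \cite{JIPSEN20101559,BOVA20091143}): a chain is trivially a forest, so the poset product of $\BL$-chains indexed by $\langle W, \preceq \rangle$ is already a $\BL$-algebra.

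For totality of the order, I would note that the order on $\mathbf{A}_{\cal W}$ is pointwise (as $\wedge, \vee$ are pointwise $\min, \max$), so it suffices to show any two sloping functions $f, g$ are comparable at every world. This is the global form of Lemma \ref{sloping-functions-are-linearly-ordered}: if $f \not\le g$ there is some $w_0$ with $f(w_0) > g(w_0)$; then $f(w_0) > \bot$ forces $f(v) = \top \ge g(v)$ for every $v \succ w_0$ by sloping, while $g(w_0) < \top$ forces $g(v) = \bot \le f(v)$ for every $v \prec w_0$ (the contrapositive of the sloping condition for $g$), and $g(w_0) < f(w_0)$ at $w_0$ itself; hence $g \le f$ everywhere, and $\mathbf{A}_{\cal W}$ is a $\BL$-chain.

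I expect the main obstacle to be the residuation law $f \otimes h \le g \iff h \le (f \to g)$ if one insists on proving the $\BL$-algebra part directly rather than citing \cite{Busaniche2018PosetPA}, since this requires reconciling the case-split definition of $\to$ with the adjunction across all comparable worlds and handling the interaction between the pointwise $\otimes$ and the $\cinf$-based $\to$. A secondary, more bookkeeping obstacle is confirming that the paper's definition, phrased for $[0,1]_{\MV}$, is to be read with each component chain's own operations so that the cited theorem applies verbatim.
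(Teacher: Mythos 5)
Your proposal is correct, and its core move coincides with the paper's: the paper offers no proof environment at all for Proposition \ref{posetprodBL} --- its entire justification is the preceding sentence, namely that every chain is trivially a forest, so Theorem 2.2(1) of \cite{Busaniche2018PosetPA} for forest-indexed poset products of $\BL$-chains applies. Where you genuinely differ is in the decomposition: you cite the forest theorem only for ``$\mathbf{A}_{\cal W}$ is a $\BL$-algebra'' and then prove ``$\mathbf{A}_{\cal W}$ is linearly ordered'' directly from the sloping condition. This added piece is worthwhile, because forest-indexing alone cannot yield chain-hood (a poset product over a two-element antichain --- which is a forest --- is a direct product of chains, hence not a chain), so the ``linearly ordered'' half of the claim requires the linearity of $W$ to be used a second time; the paper leaves this buried in the citation, whereas your comparability argument --- if $f(w_0) > g(w_0)$ then sloping forces $f(v) = \top$ for $v \succ w_0$ and $g(v) = \bot$ for $v \prec w_0$, hence $g \leq f$ pointwise by trichotomy of $\preceq$ --- makes it explicit, and is precisely the global form of the paper's own Lemma \ref{sloping-functions-are-linearly-ordered}, so it stays within the paper's toolkit. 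Your preliminary remark that the poset-product definition must be reread with each component $A_w$'s own operations also matches the paper's unstated intent, since the definition is phrased for $[0,1]_{\MV}$ while the proposition concerns arbitrary $\BL$-chains (Corollary \ref{posetprodBLasMV} then specialises back). One small slip: divisibility and prelinearity do not ``transfer pointwise,'' as both involve $\to$, the one non-pointwise operation; but this costs you nothing, since your next sentence defers every $\to$-dependent identity to the cited theorem anyway.
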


Since all MV-chains are BL-chains, we can specialise this latter:
\begin{corollary}\label{posetprodBLasMV}
Let ${\cal W} = \langle W, \preceq \rangle$ be a linearly-ordered poset and $\{{A}_w : w\in W\}$ an indexed collection of MV-chains. Then:

$${\bf A}_{\cal W} = \prod_{w\in \langle W, \preceq\rangle} { A}_w$$ 

or the poset product of this collection is a linearly ordered $\BL$-algebra, i.e. a $\BL$-chain.
\end{corollary}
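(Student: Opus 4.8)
The plan is to obtain this statement as an immediate instance of Proposition~\ref{posetprodBL}, the only real work being to check that a collection of $\MV$-chains satisfies that proposition's hypotheses. The single fact I would lean on is the class inclusion already flagged in the remark preceding the corollary: every $\MV$-chain is a $\BL$-chain. So the approach is purely one of specialisation rather than fresh argument.

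First I would unwind the definitions. By the Definition of an $\MV$-algebra given above, an $\MV$-algebra is precisely a $\BL$-algebra in which the negation $\neg x = x \to \bot$ is an involution; in particular every $\MV$-algebra already satisfies all of the defining conditions of a $\BL$-algebra (bounded, commutative, residuated, integral, divisible, and prelinear). Restricting attention to totally-ordered structures, it follows at once that every $\MV$-chain is a totally-ordered $\BL$-algebra, i.e.\ a $\BL$-chain. Hence the indexed collection $\{A_w : w \in W\}$ of $\MV$-chains is, verbatim, an indexed collection of $\BL$-chains, to which Proposition~\ref{posetprodBL} directly applies.

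Before invoking it, I would make the one point worth stating explicitly, to forestall any worry: the poset product $\prod_{w \in \langle W, \preceq \rangle} A_w$ is built from the operations of the signature $\mathcal{L}_\otimes$ in exactly the same way whether each factor $A_w$ is regarded as an $\MV$-chain or merely as a $\BL$-chain. The construction refers only to the residuated-lattice operations together with the floor/infimum clause for $\to$, none of which depends on the involutive negation, so specialising the factors to $\MV$-chains does not alter the poset-product operations. The algebra named in the statement therefore literally coincides with the poset product of $\{A_w\}$ viewed as $\BL$-chains, and applying Proposition~\ref{posetprodBL} over the linearly-ordered poset $\mathcal{W}$ yields that it is a linearly ordered $\BL$-algebra, i.e.\ a $\BL$-chain. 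I do not anticipate any genuine obstacle: the entire content is the definitional observation that $\MV$-chains form a subclass of $\BL$-chains, so the corollary is a pure specialisation of the preceding proposition.
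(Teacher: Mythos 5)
Your proof is correct and matches the paper's own reasoning exactly: the corollary is stated in the paper as an immediate specialisation of Proposition~\ref{posetprodBL}, justified by precisely the observation that every $\MV$-chain is a $\BL$-chain. Your additional check that the poset-product construction does not depend on the involutive negation is a harmless (and sensible) elaboration of the same argument.
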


\begin{note}
We conclude this section by observing that given a poset product ${\bf A}_{\cal W}$ over a linearly-ordered poset ${\cal W} = \langle W, \succeq \rangle$ and a mapping $h \colon Atoms \to {\bf A}_{\cal W}$ of atomic formulas to elements of ${\bf A}_{\cal W}$, we can obtain a LBM structure $\mathcal{M}^{{\bf A}_{\cal W}} = \langle {\cal W}, \LBMmodels_h \rangle$, by taking 
\[ (w \LBMmodels_h p) := h(p)(w) \]
recalling that $h(p) \colon W \to [0,1]_{\MV}$ is a sloping function.
\end{note}

\begin{proposition} \label{prop-poset-lbm} Let ${\bf A}_{\cal W}$ be the poset product over a linearly ordered poset ${\cal W} = \langle W, \succeq \rangle$, and $h \colon Atoms \to {\bf A}_{\cal W}$ be a fixed mapping of atomic formulas to elements of ${\cal W}$. Let $\mathcal{M}^{{\bf A}_{\cal W}}$ be the LBM-structure defined above. Then, for any formula $\phi$ 
\[ (w \LBMmodels_h \phi) = \llbracket{\phi}\rrbracket^{{\bf A}_{\cal W}}_h(w) \]
\end{proposition}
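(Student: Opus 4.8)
The plan is to argue by induction on the structure of $\phi$, but proving the slightly stronger statement that the equality $(w \LBMmodels_h \phi) = \llbracket\phi\rrbracket^{{\bf A}_{\cal W}}_h(w)$ holds \emph{simultaneously for every} $w \in W$. Quantifying over all worlds inside the induction is the crucial design choice: the implication case below forces us to apply the induction hypothesis not just at $w$ but at every $v \succeq w$, so a world-by-world formulation would be too weak to close the argument. For the base cases, the atomic case is immediate from the defining equation $(w \LBMmodels_h p) := h(p)(w)$ together with $\llbracket p\rrbracket_h := h(p)$, whence $\llbracket p\rrbracket_h(w) = h(p)(w) = (w \LBMmodels_h p)$; the case of $\bot$ is likewise immediate since both clauses evaluate to $\bot$ at every world.

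For the inductive step, the cases $\phi = \psi \wedge \chi$, $\phi = \psi \vee \chi$, and $\phi = \psi \otimes \chi$ are routine. The Kripke clause of Definition \ref{def:kripke-sem} computes $(w \LBMmodels_h \phi)$ by applying the relevant $[0,1]_{\MV}$-operation to $(w \LBMmodels_h \psi)$ and $(w \LBMmodels_h \chi)$, while the poset-product clause computes $\llbracket\phi\rrbracket_h(w)$ by applying the \emph{same} pointwise operation to $\llbracket\psi\rrbracket_h(w)$ and $\llbracket\chi\rrbracket_h(w)$. Since the two families of operations are given by identical formulas on $[0,1]_{\MV}$ (min, max, and the {\L}ukasiewicz tensor respectively), invoking the induction hypothesis at $w$ for the immediate subformulas closes each of these cases.

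The one genuine point is the case $\phi = \psi \to \chi$, and this is where I expect the main (if mild) obstacle to lie. On the Kripke side, Definition \ref{def:kripke-sem} gives $(w \LBMmodels_h \psi \to \chi) = \cinf_{v \succeq w}((v \LBMmodels_h \psi) \to (v \LBMmodels_h \chi))$. On the algebraic side, $\llbracket\psi\to\chi\rrbracket_h(w) = (\llbracket\psi\rrbracket_h \to \llbracket\chi\rrbracket_h)(w)$, and here I would first observe that $\llbracket\psi\rrbracket_h$ and $\llbracket\chi\rrbracket_h$ are elements of ${\bf A}_{\cal W}$, hence sloping functions, so the case-split defining clause of $\to$ in the poset product applies and collapses, by the equivalence recorded immediately above the poset-product definition, to the simplified form $(\llbracket\psi\rrbracket_h \to \llbracket\chi\rrbracket_h)(w) = \cinf_{v \succeq w}(\llbracket\psi\rrbracket_h(v) \to \llbracket\chi\rrbracket_h(v))$. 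This is syntactically the very same $\cinf_{v \succeq w}$ expression as the Kripke clause. Applying the induction hypothesis at each $v \succeq w$ — which is exactly what the universally-quantified form of the statement supplies — yields $(v \LBMmodels_h \psi) = \llbracket\psi\rrbracket_h(v)$ and $(v \LBMmodels_h \chi) = \llbracket\chi\rrbracket_h(v)$, so the two $\cinf$ expressions agree argument-for-argument and the case is done.

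In short, the proof reduces to checking that the pointwise operations of the poset product are literally the LBM connective clauses, and that the two $\to$-clauses are made to coincide by the $\cinf$ simplification. The only care needed is (i) to carry the induction hypothesis universally over worlds so that the implication case has access to equality at all $v \succeq w$, and (ii) to justify using the simplified poset-product $\to$-clause, which rests on the subformula denotations being sloping (guaranteed for the algebraic side by construction of ${\bf A}_{\cal W}$, and compatible with Lemma \ref{monotonicity-lemma-logical} on the semantic side). No further difficulty arises.
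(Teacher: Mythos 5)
Your proof is correct and takes essentially the same approach as the paper, whose entire proof of this proposition is the single line ``By induction on the complexity of $\phi$.'' Your write-up simply supplies the details that the paper leaves implicit --- in particular the world-quantified induction hypothesis and the use of the simplified (sloping-function) form of the poset-product $\to$ clause, both of which are exactly the right ingredients.
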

\begin{proof} By induction on the complexity of $\phi$. 
\end{proof}

So we can always transform an interpretation of $\mathcal{L}_\otimes$ formulas in the poset product ${\bf A}_{\cal W}$ into a general Kripke semantics (on the Kripke frame ${\cal W}$) for $\mathcal{L}_\otimes$ formulas.

\section{Completeness of LBM-semantics}
In the present section we prove completeness of $\BL$ for the semantics presented above. Our proof is a departure from our earlier paper \cite{Robinson2021-LEWKSF}. There, we simply embed our semantics into the poset products of Jipsen and Montagna, and let them do the rest of the work.
Here, we can actually give a slightly more detailed argument making use of facts about $\BL$-algebras, poset products and ordinal sums that are unique to this setting (see \cite{Busaniche2018PosetPA} for more on this issue). \\ 
\par Our approach is also different from that of Wesley Fussner\footnote{To comment in slightly more detail: We appeal solely to facts known from the literature on $\BL$-chains and exploit the conditions under which ordinal sums and poset products coincide. We also rely on a fixed natural deduction system and deduction theorem (given earlier), thus resembling classic proofs of completeness such as that for Godel-Dummett logic, furthering our claim that the semantics given really generalises the classic Kripke semantics for $\GD$. We hope this makes for a more digestible proof for a broader logical audience.}, who takes a more general approach based on poset products of GBL-algebras and considers on a case-by-case basis (potentially infinitely many) axiomatic extensions of the base system $\GBL$, alias $\BL$ sans pre-linearity, exchange, ex-falso quodlibet and commutativity. Our proof is an alternative to Fussner's, albeit for $\BL$ solely. Indeed, the conditions that make our proof go through are unique to $\BL$, and cannot be generalised to $\GBL$. 

\begin{thm}[Completeness] If $\Gamma \LBMmodels \phi$ then $\Gamma \vdash_{\mbox{\scriptsize{\BL}}} \phi$.
\end{thm}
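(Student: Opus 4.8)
The plan is to prove the contrapositive: I would show that whenever $\Gamma \not\vdash_{\mbox{\scriptsize{\BL}}} \phi$, there is an LBM-structure refuting the sequent, so that $\Gamma \nLBMmodels \phi$. The first move is to descend from the calculus to an algebraic countermodel. By the algebraic completeness of $\BL$ (Proposition~\ref{ALGcompletenessBL}), $\Gamma \not\vdash_{\mbox{\scriptsize{\BL}}} \phi$ yields $\Gamma \not\models_{\BL} \phi$, so there is some $\BL$-algebra together with a valuation for which $\llbracket \psi_1 \rrbracket \otimes \cdots \otimes \llbracket \psi_n \rrbracket \not\leq \llbracket \phi \rrbracket$. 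Using the subdirect representation of $\BL$-algebras as products of $\BL$-chains (H\'ajek, \cite{Hajek1998-HAJMOF}), this refutation can be localised to a single $\BL$-chain $\mathcal{C}$, where, because $\mathcal{C}$ is totally ordered, the failure takes the strict form $\llbracket \otimes\Gamma \rrbracket > \llbracket \phi \rrbracket$ (writing $\otimes\Gamma := \psi_1 \otimes \cdots \otimes \psi_n$ as in the soundness proof).

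The crucial second step is to realise $\mathcal{C}$ as a poset product of MV-chains over a linear order, so that Corollary~\ref{posetprodBLasMV} and the transfer Proposition~\ref{prop-poset-lbm} apply. Here I would invoke facts special to $\BL$: every $\BL$-chain is an ordinal sum of totally ordered Wajsberg hoops, and — restricting attention to \emph{finite} chains via the finite model property of $\BL$, whose summands are then finite MV-chains — a finite ordinal sum of MV-chains over a linear index set coincides with the poset product of those same MV-chains over that index chain. This is precisely the correspondence between ordinal sums and poset products appealed to in \cite{Busaniche2018PosetPA}, and it is the ingredient unavailable over $\GBL$. The upshot is a linearly ordered poset $\mathcal{W} = \langle W, \preceq \rangle$ and MV-chains $\{A_w\}_{w \in W}$ with $\mathcal{C} \cong {\bf A}_{\cal W} = \prod_{w} A_w$, the valuation on $\mathcal{C}$ transporting to an assignment $h \colon Atoms \to {\bf A}_{\cal W}$.

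Finally I would transport the refutation to the Kripke side. The assignment $h$ induces the LBM-structure $\mathcal{M}^{{\bf A}_{\cal W}} = \langle {\cal W}, \LBMmodels_h \rangle$, and Proposition~\ref{prop-poset-lbm} gives $(w \LBMmodels_h \chi) = \llbracket \chi \rrbracket^{{\bf A}_{\cal W}}_h(w)$ for every formula $\chi$ and world $w$. Unwinding $\llbracket \otimes\Gamma \rrbracket \not\leq \llbracket \phi \rrbracket$ in the poset product, whose order is pointwise, yields some $w \in W$ with $\llbracket \otimes\Gamma \rrbracket_h(w) > \llbracket \phi \rrbracket_h(w)$, hence $(w \LBMmodels_h \otimes\Gamma) > (w \LBMmodels_h \phi)$. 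This is exactly the failure condition of Definition~\ref{def:model}, so $\Gamma \nLBMmodels_{\mathcal{M}} \phi$ and therefore $\Gamma \nLBMmodels \phi$, which establishes the contrapositive and completes the proof.

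I expect the second step to be the main obstacle: the passage from an abstract $\BL$-chain to a genuine poset product of MV-chains over a linear order. The delicate point is that a general $\BL$-chain decomposes as an ordinal sum of Wajsberg hoops whose higher summands may be unbounded (cancellative) rather than MV-chains, so only the finite case — where every summand is a finite MV-chain and the ordinal sum literally agrees with the poset product — is safely covered; this is why routing the argument through the finite model property is the clean path, and why the whole strategy is tied to $\BL$ and does not lift to $\GBL$.
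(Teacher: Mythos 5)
Your overall architecture coincides with the paper's: argue the contrapositive, extract an algebraic countermodel, localise it to a $\BL$-chain, realise that chain as a poset product of MV-chains over a finite linear order, and transport the refutation to an LBM-structure via Proposition~\ref{prop-poset-lbm}. (Your opening move, Proposition~\ref{ALGcompletenessBL} plus subdirect representation, is an acceptable substitute for the paper's route through the deduction theorem, Propositions~\ref{othersimple} and \ref{BL-equiv}, and H\'ajek's completeness.) Where you diverge is precisely the step you flag as crucial. The paper invokes neither the finite model property nor the Wajsberg-hoop decomposition: it cites Montagna's Theorem~3 of \cite{Montagna2005}, by which the refuting $\BL$-chain may be taken to be an ordinal sum of \emph{finitely many copies of the standard MV-chain} $[0,1]_{\MV}$, and then Busaniche's Lemma~2.3 of \cite{Busaniche2018PosetPA} to identify that ordinal sum with a poset product. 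This is not a cosmetic difference: Montagna's theorem is exactly what dissolves the difficulty you raise at the end (summands that are cancellative, unbounded hoops rather than MV-chains), and it does so with no appeal to finiteness of the component algebras, only finiteness of the index set.

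There are two genuine gaps in your version. First, your load-bearing appeal to the finite model property of $\BL$ is asserted without reference; the result is true but nontrivial, and nothing in the paper's toolkit supplies it, so your proof rests on a theorem you have not discharged or cited. Second, and more concretely, your route ends with a poset product $\prod_{w} A_w$ whose components are \emph{finite} MV-chains, but the paper's poset product ${\bf A}_{\cal W}$, its LBM-structures, and Proposition~\ref{prop-poset-lbm} are all defined for sloping functions with values in the standard chain $[0,1]_{\MV}$: the semantic clauses of Definitions~\ref{std-mv-chain} and \ref{def:kripke-sem} use the arithmetic of the unit interval. So Proposition~\ref{prop-poset-lbm} does not apply verbatim to $\prod_w A_w$, and the failure you produce there is not yet a failure in an LBM-structure in the paper's sense. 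The repair is routine but must be said: each finite MV-chain embeds as a subalgebra $\{0, 1/n, \ldots, 1\} \subseteq [0,1]_{\MV}$; the poset product of these subalgebras is a subalgebra of ${\bf A}_{\cal W}$ (the only clause needing care is the residual, whose case condition depends only on pointwise comparisons and is therefore the same in both algebras); and subalgebra embeddings preserve the failure of $\llbracket \otimes\Gamma \rrbracket \leq \llbracket \phi \rrbracket$. With the FMP properly cited and this embedding step added, your argument goes through; the paper's choice of Montagna's theorem is what lets it skip both.
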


We prove this theorem by way of the following lemma.

\begin{lemma}
If a formula fails in $\BL$, then it fails in a linear BM structure.
\end{lemma}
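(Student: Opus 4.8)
The plan is to prove this lemma (equivalently, the contrapositive of the Completeness theorem) by chasing an algebraic counexample through a poset-product representation and reading it off as an LBM structure. First I would unpack ``fails in $\BL$'': by the algebraic completeness result (Proposition \ref{ALGcompletenessBL}), a non-provable $\phi$ is not $\BL$-valid, so there is some $\BL$-algebra $\mathcal{A}$ and assignment with $\llbracket\phi\rrbracket < \top$. Since $\BL$-algebras are prelinear, every $\BL$-algebra is a subdirect product of $\BL$-chains; hence the refutation can be pushed onto a single factor, yielding a $\BL$-chain $\mathcal{C}$ and an assignment under which $\llbracket\phi\rrbracket_{\mathcal C} < \top$. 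This reduces matters to refuting $\phi$ in a $\BL$-chain, and by Proposition \ref{othersimple} the same reduction handles the general sequent form $\Gamma \vdash \phi$ after internalising the context with $\otimes$ and $\to$.

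The heart of the argument, and the step I expect to be the main obstacle, is to realise the $\BL$-chain $\mathcal{C}$ as, or embed it into, a poset product of $\MV$-chains over a linear order. Here I would invoke the structure theory of $\BL$-chains: by the Aglian\`o--Montagna decomposition every $\BL$-chain is an ordinal sum of totally ordered Wajsberg hoops over a chain index set, and by the results of \cite{Busaniche2018PosetPA} such ordinal sums coincide, over a linearly ordered index, with the poset product of the corresponding $\MV$-chains. This is precisely the ``conditions under which ordinal sums and poset products coincide'' alluded to above, and it is special to the prelinear (chain) setting, which is why it cannot be pushed down to $\GBL$. Combined with Corollary \ref{posetprodBLasMV}, this presents $\mathcal{C}$ as a poset product $\mathbf{A}_{\cal W} = \prod_{w \in {\cal W}} A_w$ of $\MV$-chains over a linear order ${\cal W}$, with the evaluation of $\phi$ preserved. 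The delicate point to verify is the treatment of the unbounded (cancellative) hoop summands, which are not themselves $\MV$-chains and so must be absorbed into the poset-product structure rather than appearing as literal factors; once this is checked, the outcome I need is a mapping $h\colon Atoms \to \mathbf{A}_{\cal W}$ and a world $w_0 \in W$ with $\llbracket\phi\rrbracket^{\mathbf{A}_{\cal W}}_h(w_0) < \top$.

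Finally I would transport this failure into the Kripke setting. The Note preceding Proposition \ref{prop-poset-lbm} turns the poset product $\mathbf{A}_{\cal W}$ together with $h$ into an LBM structure $\mathcal{M}^{\mathbf{A}_{\cal W}} = \langle {\cal W}, \LBMmodels_h \rangle$, and Proposition \ref{prop-poset-lbm} gives $(w \LBMmodels_h \phi) = \llbracket\phi\rrbracket^{\mathbf{A}_{\cal W}}_h(w)$ for every $w$. In particular $(w_0 \LBMmodels_h \phi) < \top$, so $\phi$ fails at $w_0$ in $\mathcal{M}^{\mathbf{A}_{\cal W}}$, a linear BM structure, as the lemma requires; feeding this into the contrapositive then yields the Completeness theorem. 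The only genuinely substantial ingredient is the ordinal-sum/poset-product representation of the middle paragraph, since the outer two steps are routine transfers along correspondences already established in the paper.
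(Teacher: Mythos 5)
Your overall skeleton agrees with the paper's: pass from non-provability to an algebraic counterexample, reduce to a $\BL$-chain, represent that chain as a poset product over a linear order, and read the result off as an LBM structure via the Note and Proposition \ref{prop-poset-lbm}. The outer steps are essentially the paper's as well, modulo bookkeeping: the paper routes the sequent through Propositions \ref{othersimple} and \ref{BL-equiv} into the Hilbert system $\BL_H$ and then cites H\'ajek's completeness and his Theorem 1 of \cite{Hajek1998} to land directly on a $\BL$-chain, which is the same reduction you achieve with the subdirect decomposition into chains.

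However, the middle step --- which you correctly identify as the heart of the argument --- has a genuine gap. The Agliano--Montagna decomposition presents an arbitrary $\BL$-chain as an ordinal sum of totally ordered Wajsberg hoops, and the unbounded (cancellative) summands are not something one can simply ``check'' can be absorbed: a cancellative hoop is not an $\MV$-chain, and an ordinal sum containing such a summand is not of the form covered by Corollary \ref{posetprodBLasMV}, so no appeal to \cite{Busaniche2018PosetPA} applies to it as stated. There is a second problem even for the bounded summands: LBM structures and the paper's poset products are defined over the \emph{standard} chain $[0,1]_{\MV}$ of Definition \ref{std-mv-chain}, whereas the Wajsberg algebra components of an arbitrary $\BL$-chain are arbitrary $\MV$-chains (e.g.\ chains with infinitesimals) that need not embed into $[0,1]_{\MV}$; so ``a poset product of $\MV$-chains'' is not yet an LBM structure. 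The paper sidesteps both obstructions by using a completeness theorem instead of a structure theorem: Theorem 3 of \cite{Montagna2005} allows one to take the refuting $\BL$-chain to be an ordinal sum of \emph{finitely many copies of $[0,1]_{\MV}$ itself}, and Lemma 2.3 of \cite{Busaniche2018PosetPA} then identifies this particular ordinal sum with a poset product of finitely many copies of $[0,1]_{\MV}$, which is exactly the input Proposition \ref{prop-poset-lbm} needs. To repair your argument, replace the Agliano--Montagna decomposition and the unproven ``absorption'' claim with this appeal to Montagna's theorem; as written, your proof does not go through.
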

\begin{proof}
Let $\Gamma \equiv \psi_1, \ldots, \psi_n$. Suppose $\Gamma \vdash \phi$ fails in $\BL$.
By Propositions \ref{othersimple} and \ref{BL-equiv}, it follows that 
\[
\not\vdash_{\scriptsize{\BL_{H}}} \psi_1 \to \ldots \to \psi_n  \to \phi
\]
By the algebraic completeness result for $\BL$~algebras with respect to the Hilbert-style proof system $\BL_{H}$ (see \cite{Hajek1998}), it follows that for some $\BL$-algebra $\mathcal{G}$ and some mapping $h \colon Atom \to \mathcal{G}$ from propositional variables to elements of $\mathcal{G}$, we have
\[
\llbracket{\psi_1 \to \ldots \to \psi_n  \to \phi}\rrbracket_h^{\mathcal{G}} \neq \top
\]
By (\cite[Theorem 1]{Hajek1998}) we can take $\mathcal{G}$ to be a $\BL$-chain, and by 
Montagna's Theorem 3 of \cite{Montagna2005}, we can take $\mathcal{G}$ as the ordinal sum of finitely many copies of $[0,1]_{\MV}$; and by Busaniche's Lemma 2.3 of \cite{Busaniche2018PosetPA}, this particular ordinal sum is isomorphic to a poset product of finitely many copies of $[0,1]_{\MV}$ (which product will therefore also be linearly ordered and finite by Corollary \ref{posetprodBLasMV}).\\[1mm] 
%
Hence by Proposition \ref{prop-poset-lbm} there exists a finite linear order $\langle W, \preceq \rangle$ and a map $h\prime: Atoms \to [0,1]_{\MV}$ from atoms to elements of the poset product ${\bf A}_{\cal W}$ such that for some $w \in W$:  
\[ \llbracket{\psi_1 \to \ldots \to \psi_n  \to \phi}\rrbracket_{h'}^{{\bf A}_{\cal W}}(w) \neq \top \]
By Proposition \ref{prop-poset-lbm}, we have a LBM-structure $\mathcal{M}^{{\bf A}_{\cal W}}$ such that for some $w \in W$
\[
(w \LBMmodels_{h'} \psi_1 \to \ldots \to \psi_n  \to \phi) \neq \top
\]
and hence
\[
(w \LBMmodels_{h'} \psi_1 \otimes \ldots \otimes \psi_n) \not\leq (w \LBMmodels \phi)
\]
i.e. $\psi_1, \ldots, \psi_n \nLBMmodels \phi$, so the sequent fails in a LBM.
\end{proof}

\section{Conclusion}
 The state of the art abounds with work deeply algebraic in character. By introducing generalisations of Kripke semantics adequate for $\BL$ (and neighbours) we seek a new  perspective on fuzzy logics such as $\BL$ as constructive or semi-constructive systems. This analogy is justified by a relational semantics that specialises in the present case to $\GD$. Moreover, relational semantics typically come pregnant with connections to proof theory, decidability and model theory. We hope in the fullness of time the semantics developed here (and in \cite{Robinson2021-LEWKSF} and \cite{Fussner2022}) will suggest new calculi for $\BL$ which currently lacks a suitable analytic, syntactically-based proof theory. The existent analytic calculi for $\BL$ rely crucially on semantic insights, e.g. \cite{phdthesis}.
\par There are three main alternatives on offer. One approach is semantic tableaux, or refutation systems. One can exploit the notion of unsatisfiability in our relational semantics to devise a proof system, as is done in classical modal or Intuitionistic logic (see \cite{priest2008introduction}). Another possibility is to consider labelled calculi, e.g. \cite{negri2011} or \cite{Gabbay1996-GABLDS}. Labelled calculi have been well-tested in cases of classical modal logics, extensions of classical modal systems, substructural cases, as well as logics over an intuitionistic or minimal base, although as yet untested on cases involving of divisibility ((A7)-(A8) p.4). Still another alternative are the multi-type calculi of e.g. \cite{DBLP:journals/sLogica/GrecoLMP21}. These calculi seem well-adapted to systems defined over distributive structures, e.g. chains, having strong algebraic underpinnings. This bodes well for $\BL$, which is complete for $\BL$-chains.

\bibliographystyle{eptcs}
\bibliography{generic}

\begin{thebibliography}{10}
\providecommand{\bibitemdeclare}[2]{}
\providecommand{\surnamestart}{}
\providecommand{\surnameend}{}
\providecommand{\urlprefix}{Available at }
\providecommand{\url}[1]{\texttt{#1}}
\providecommand{\href}[2]{\texttt{#2}}
\providecommand{\urlalt}[2]{\href{#1}{#2}}
\providecommand{\doi}[1]{doi:\urlalt{https://doi.org/#1}{#1}}
\providecommand{\eprint}[1]{arXiv:\urlalt{https://arxiv.org/abs/#1}{#1}}
\providecommand{\bibinfo}[2]{#2}

\bibitemdeclare{article}{BOVA20091143}
\bibitem{BOVA20091143}
\bibinfo{author}{Simone \surnamestart Bova\surnameend} \&
  \bibinfo{author}{Franco \surnamestart Montagna\surnameend}
  (\bibinfo{year}{2009}): \emph{\bibinfo{title}{The consequence relation in the
  logic of commutative {GBL}-algebras is {PSPACE}-complete}}.
\newblock {\slshape \bibinfo{journal}{Theoretical Computer Science}}
  \bibinfo{volume}{410}(\bibinfo{number}{12}):\bibinfo{eid}{4},
  \doi{10.1016/j.tcs.2008.10.024}.
\newblock \eprint{2042.12345}.

\bibitemdeclare{article}{Busaniche2018PosetPA}
\bibitem{Busaniche2018PosetPA}
\bibinfo{author}{M.~\surnamestart Busaniche\surnameend} \&
  \bibinfo{author}{C.~\surnamestart Gomez\surnameend} (\bibinfo{year}{2018}):
  \emph{\bibinfo{title}{Poset Product and BL-Chains}}.
\newblock {\slshape \bibinfo{journal}{Studia Logica}} \bibinfo{volume}{106},
  \doi{10.1007/s11225-017-9764-6}.

\bibitemdeclare{article}{Fussner2022}
\bibitem{Fussner2022}
\bibinfo{author}{Wesley \surnamestart Fussner\surnameend}
  (\bibinfo{year}{2021}): \emph{\bibinfo{title}{Poset Products as Relational
  Models}}.
\newblock {\slshape \bibinfo{journal}{Studia Logica}}
  \bibinfo{volume}{110}:\bibinfo{eid}{4}, \doi{10.1007/s11225-021-09956-z}.

\bibitemdeclare{inbook}{Gabbay1996-GABLDS}
\bibitem{Gabbay1996-GABLDS}
\bibinfo{author}{Dov \surnamestart Gabbay\surnameend} (\bibinfo{year}{1996}):
  \emph{\bibinfo{title}{Labelled Deductive Systems}}.
\newblock \bibinfo{publisher}{Oxford University Press}.

\bibitemdeclare{article}{DBLP:journals/sLogica/GrecoLMP21}
\bibitem{DBLP:journals/sLogica/GrecoLMP21}
\bibinfo{author}{Giuseppe \surnamestart Greco\surnameend}, \bibinfo{author}{Fei
  \surnamestart Liang\surnameend}, \bibinfo{author}{Michael~Andrew
  \surnamestart Moshier\surnameend} \& \bibinfo{author}{Alessandra
  \surnamestart Palmigiano\surnameend} (\bibinfo{year}{2021}):
  \emph{\bibinfo{title}{Semi De Morgan Logic Properly Displayed}}.
\newblock {\slshape \bibinfo{journal}{Studia Logica}} \bibinfo{volume}{109},
  \doi{10.1007/s005000000044}.

\bibitemdeclare{article}{Hajek1998}
\bibitem{Hajek1998}
\bibinfo{author}{Petr \surnamestart Hajek\surnameend} (\bibinfo{year}{1998}):
  \emph{\bibinfo{title}{Basic Fuzzy Logic and BL Algebras}}.
\newblock {\slshape \bibinfo{journal}{Soft Computing}} \bibinfo{volume}{2},
  \doi{10.1007/s005000050043}.

\bibitemdeclare{inbook}{Hajek1998-HAJMOF}
\bibitem{Hajek1998-HAJMOF}
\bibinfo{author}{Petr \surnamestart Hajek\surnameend} (\bibinfo{year}{1998}):
  \emph{\bibinfo{title}{Metamathematics of Fuzzy Logic}}.
\newblock {\slshape \bibinfo{series}{Trends in Logic}}~\bibinfo{volume}{4},
  \bibinfo{publisher}{Kluwer}, \doi{10.1007/978-94-011-5300-3_4}.

\bibitemdeclare{article}{JIPSEN20101559}
\bibitem{JIPSEN20101559}
\bibinfo{author}{Peter \surnamestart Jipsen\surnameend} \&
  \bibinfo{author}{Franco \surnamestart Montagna\surnameend}
  (\bibinfo{year}{2010}): \emph{\bibinfo{title}{Embedding theorems for classes
  of {GBL}-algebras}}.
\newblock {\slshape \bibinfo{journal}{Journal of Pure and Applied Algebra}}
  \bibinfo{volume}{214}(\bibinfo{number}{9}):\bibinfo{eid}{4},
  \doi{10.1016/j.jpaa.2009.11.015}.

\bibitemdeclare{phdthesis}{phdthesis}
\bibitem{phdthesis}
\bibinfo{author}{Agnieszka \surnamestart Kulacka\surnameend}
  (\bibinfo{year}{2017}): \emph{\bibinfo{title}{Propositional Fuzzy Logics:
  Tableaux and Strong Completeness}}.
\newblock Ph.D. thesis, \bibinfo{school}{Imperial College London},
  \doi{10.25560/68598}.
\newblock \urlprefix\url{https://spiral.imperial.ac.uk/handle/10044/1/68598}.

\bibitemdeclare{article}{Robinson2021-LEWKSF}
\bibitem{Robinson2021-LEWKSF}
\bibinfo{author}{Andrew \surnamestart Lewis-Smith\surnameend},
  \bibinfo{author}{Paolo \surnamestart Oliva\surnameend} \&
  \bibinfo{author}{Edmund \surnamestart Robinson\surnameend}
  (\bibinfo{year}{2020}): \emph{\bibinfo{title}{A Kripke Semantics for
  Intuitionistic \L{}ukasiewicz Logic}}.
\newblock {\slshape \bibinfo{journal}{Studia Logica}}
  \bibinfo{volume}{109}:\bibinfo{eid}{4}, \doi{10.1007/s11225-020-09908-z}.

\bibitemdeclare{article}{Montagna2005}
\bibitem{Montagna2005}
\bibinfo{author}{Franco \surnamestart Montagna\surnameend}
  (\bibinfo{year}{2005}): \emph{\bibinfo{title}{Generating the Variety of
  BL-Algebras}}.
\newblock {\slshape \bibinfo{journal}{Studia Logica}} \bibinfo{volume}{9},
  \doi{10.1007/s00500-004-0450-z}.

\bibitemdeclare{inbook}{negri2011}
\bibitem{negri2011}
\bibinfo{author}{Sara \surnamestart Negri\surnameend} \&
  \bibinfo{author}{Jan~Van \surnamestart Plato\surnameend}
  (\bibinfo{year}{2011}): \emph{\bibinfo{title}{Proof Analysis: A Contribution
  to Hilbert's Last Problem}}.
\newblock \bibinfo{publisher}{Cambridge University Press},
  \doi{10.1017/CBO9781139003513}.

\bibitemdeclare{inbook}{priest2008introduction}
\bibitem{priest2008introduction}
\bibinfo{author}{Graham \surnamestart Priest\surnameend}
  (\bibinfo{year}{2008}): \emph{\bibinfo{title}{An Introduction to
  Non-Classical Logic: From If to Is}}.
\newblock \bibinfo{publisher}{Cambridge University Press},
  \doi{10.1017/CBO9780511801174}.

\end{thebibliography}
\end{document}